\documentclass[11pt]{article}
\usepackage[margin=1in]{geometry}

\usepackage[comma]{natbib}
\usepackage{url}
\usepackage{amsmath,amssymb,amsthm}
\usepackage{graphicx}

\title{Elements of Conformal Prediction}

\author{
Matteo Sesia\thanks{Department of Data Sciences and Operations, and Thomas Lord Department of Computer Science, University of Southern California, Los Angeles, CA 90089, USA. Email: \texttt{sesia@marshall.usc.edu}}
\and
Stefano Favaro\thanks{Dipartimento di Scienze Economico-Sociali e Matematico-Statistiche, Universit\`a di Torino and Collegio Carlo Alberto, Torino, Italy.}
}

\newif\ifarxiv
\arxivtrue          

\newif\ifshowfigures

\ifarxiv
  \showfigurestrue
  \newcommand{\appsec}{Appendix}
\else
  \showfiguresfalse
  \newcommand{\appsec}{Supplementary Section}
\fi

\usepackage{amsmath, amssymb, amsthm} 
\usepackage{stmaryrd,scalerel}

\newtheorem{theorem}{Theorem}

\renewcommand{\P}[1]{\mathbb{P}\left[#1\right]}

\newcommand{\I}[1]{\mathbb{I}\left[#1\right]}

\usepackage{natbib}
\usepackage{url}

\makeatletter
\providecommand{\doi}[1]{}
\providecommand{\url}[1]{}
\providecommand{\href}[2]{#2}
\providecommand{\eprint}[1]{}
\makeatother

\usepackage{placeins}

\newenvironment{extract}
  {\begin{quote}\itshape}
  {\end{quote}}

\usepackage{comment}
\excludecomment{textbox}
\excludecomment{marginnote}

\newenvironment{summary}[1][]{\section*{#1}}{}
\newenvironment{issues}[1][]{\section*{#1}}{}

\begin{document}

\maketitle

\begin{abstract}
Predictive inference is a fundamental task in statistics, traditionally addressed using parametric assumptions about the data distribution and detailed analyses of how models learn from data. In recent years, conformal prediction has emerged as an alternative framework that is well suited to modern applications involving high-dimensional data and complex machine learning models. Its appeal stems from being both distribution-free---relying mainly on symmetry assumptions such as exchangeability---and model-agnostic, treating the learning algorithm as a black box. Even under such limited assumptions, conformal prediction provides exact finite-sample guarantees, although these are typically marginal and require careful interpretation. This paper explains the core ideas of conformal prediction and reviews selected methods. Rather than offering an exhaustive survey, it aims to provide a clear conceptual entry point and a pedagogical overview of the field.


\end{abstract}

\noindent\textbf{Keywords:} Exchangeability, distribution-free methods, exact inference, machine learning, predictive inference.

\section{INTRODUCTION}

A pioneering work in conformal prediction \citep{vovk1999machine} opens with:
\begin{extract}
``two important differences of most modern methods of machine learning from classical statistical methods are that: (1) machine learning methods produce bare predictions, without estimating confidence in those predictions; and (2) many machine learning methods are designed to work under the general i.i.d.~assumption and they are able to deal with extremely high-dimensional hypothesis spaces.''
\end{extract}
This observation remains relevant today and explains the growth of conformal prediction, a versatile statistical framework developed to quantify uncertainty in the predictions of complex models while providing exact statistical guarantees under limited assumptions.

The roots of conformal prediction can be traced back at least to foundational statistical work from the 1940s \citep{wilks1941determination,wald1943extension,scheffe1945non,tukey1947non,tukey1948nonparametric}. Remarkably, its core ideas have remained largely unchanged, even as methodology and the scope of machine learning applications have expanded dramatically.
This stability reflects the reliance of conformal prediction on fundamental statistical principles: it requires neither parametric assumptions on the data distribution nor knowledge of the internal mechanics of the predictive models it accompanies. As a result, conformal prediction is well positioned to remain relevant as data sets grow and models continue to evolve.

Many high-quality expository resources on conformal prediction already exist, including books \citep{vovk2022algorithmic,angelopoulos2024theoretical}, literature surveys \citep{shafer2008tutorial,tian2022methods,fontana2023conformal,zhou2025conformal}, and practitioner-oriented tutorials \citep{angelopoulos2023conformal}. Moreover, research in this area is still expanding rapidly, so that a new comprehensive survey of recent advances would risk becoming quickly outdated. Accordingly, the goal of this review is to complement existing resources by offering a concise and pedagogical introduction to some of the key ideas, starting from the beginning and adopting a perspective that should resonate with statisticians.

\begin{textbox}[b]
\section{Significance Statement}
Modern data analysis increasingly relies on complex machine learning models whose predictions may often be accurate but typically lack clear measures of reliability. Conformal prediction addresses this issue by providing principled uncertainty estimation for any predictive model. Rather than depending on assumptions about model structure or data distribution, conformal prediction is based primarily on observed model performance on data representative of the population of interest. This paper outlines the core ideas of conformal prediction, discusses its strengths and limitations, and highlights some useful extensions.
\end{textbox}

\section{FOUNDATIONS} \label{sec:foundations}

\subsection{Exchangeability, Conformal Prediction Sets and $p$-Values} \label{sec:conformal-fundamentals}

\subsubsection{Exchangeable data}
We consider data that can be represented as $n+1$ pairs $Z_i=(X_i,Y_i)$, with $X_i \in \mathcal{X}$ and $Y_i \in \mathcal{Y}$, for all $i \in [n+1] := \{1,\ldots,n+1\}$, where $\mathcal{X},\mathcal{Y}$ are measurable spaces and $\mathcal{Z} = \mathcal{X} \times \mathcal{Y}$. Intuitively, $Y$ is the outcome (or label) to predict and $X$ are features (or covariates) that may be relevant.
Given data $\mathbf{Z}_{1:n} := (Z_1, \ldots, Z_n)$ and new features $X_{n+1}$, the goal is to predict, with confidence, the outcome $Y_{n+1}$, before observing it.
The main assumption is that $Z_{1},\ldots,Z_{n+1}$ are exchangeable random samples from some population.
Throughout the paper, we write $\{Z_{1}, \ldots, Z_{n}\}$ to indicate the unordered set (or bag) of values in $\mathbf{Z}_{1:n}$, preserving possible repetitions, and sometimes shorten this to $\Lbag \mathbf{Z}_{1:n} \Rbag$, consistent with the notation of \citet[Section~2.2.1]{vovk2022algorithmic}.

\begin{marginnote}[]
\entry{Exchangeability}{Random variables $Z_1, \ldots, Z_{n+1}$ are exchangeable if their joint distribution does not change when the indices are permuted. This is a weaker condition than the  classical independent and identically distributed (i.i.d.)\ assumption.}
\entry{Why ``conformal''}{
The term reflects the idea of assessing how well a new sample conforms to previously observed data assumed to be exchangeable.
}
\end{marginnote}

\begin{textbox}[b]
\section{Motivating example: reference ranges for clinical laboratory tests}

Clinical test results are often interpreted using reference ranges that vary with patient characteristics.
For example, let $Y_i$ denote serum creatinine level (a standard marker of kidney function) for patient $i$, and let $X_i$ include features such as age and sex.
Given independent samples $\mathbf{Z}_{1:n} = \{(X_i, Y_i)\}_{i=1}^{n}$ from a healthy reference population, the goal is to construct a patient-specific interval $C_\alpha(X_{n+1}; \mathbf{Z}_{1:n})$ for a new patient.

For a 65-year-old woman, conformal prediction may yield $C_{0.05}(X_{n+1}; \mathbf{Z}_{1:n}) = [0.5,1.2]$ mg/dL.
Marginal coverage guarantees that, on average, no more than $5\%$ of individuals from the reference population have values outside their intervals.
If her measured value is $1.4$ mg/dL, it may warrant further clinical evaluation.
\end{textbox}

\subsubsection{Uncertainty quantification via prediction sets} \label{sec:uq-pred-sets}
Conformal methods quantify uncertainty about $Y_{n+1}$ by constructing a prediction set. This, denoted by $C_\alpha(X_{n+1}; \mathbf{Z}_{1:n}) \subseteq \mathcal{Y}$, may depend on the features $X_{n+1}$, the data $\mathbf{Z}_{1:n}$, and a significance level $\alpha \in (0,1)$.
In practice, $C_\alpha(X_{n+1}; \mathbf{Z}_{1:n})$ is often designed to guarantee marginal coverage:\footnote{This corresponds to one of two coverage requirements for tolerance regions in \citet{wilks1941determination}.}
\begin{align} \label{eq:coverage}
\P{Y_{n+1} \in C_\alpha(X_{n+1}; \mathbf{Z}_{1:n})} \geq 1 - \alpha.
\end{align}
Crucially, this guarantee is finite-sample and distribution-free, meaning it holds exactly for any data distribution under which $\mathbf{Z}_{1:(n+1)}$ are exchangeable.
The probability is taken over $(X_{n+1}, Y_{n+1})$ and $\mathbf{Z}_{1:n}$, all of which are random---hence the term ``marginal''.
Therefore, Eq.~\ref{eq:coverage} does not imply coverage conditional on any particular value of $X_{n+1}$, $Y_{n+1}$, or $\mathbf{Z}_{1:n}$.

\begin{marginnote}[]
\entry{Prediction vs.\ confidence sets}{
Prediction sets may resemble confidence sets, whose aim is $\P{\theta \in C_\alpha(\mathbf{Z}_{1:n})} \ge 1-\alpha$ for a fixed population parameter $\theta$.
Unlike $\theta$, however, $Y_{n+1}$ is random and may be observed in the future.
}
\end{marginnote}

Marginal coverage is a reasonable objective because it is easy to achieve in finite samples under limited assumptions. However, it is rarely fully satisfactory on its own, and that is why practical conformal prediction methods are typically designed to not only satisfy Eq.~\ref{eq:coverage} but also produce prediction sets that are as informative as possible.

Although different applications may call for different measures of informativeness, a broadly appealing ideal goal would be to minimize the average ``size'' (e.g., cardinality or volume) of the prediction sets while achieving feature-conditional coverage,
\begin{align} \label{eq:coverage-X-cond}
\P{Y_{n+1} \in C_\alpha(X_{n+1}; \mathbf{Z}_{1:n}) \mid X_{n+1} = x} \ge 1-\alpha,
\qquad \forall x \in \mathcal{X}.
\end{align}
Prediction sets satisfying Eq.~\ref{eq:coverage-X-cond} with minimal size would be highly informative, with uncertainty tailored to the inherent difficulty of predicting $Y_{n+1}$ given $X_{n+1}$. Unfortunately, guaranteeing feature-conditional coverage with reasonably-sized prediction sets is often impossible, especially when the feature space $\mathcal{X}$ is large \citep[e.g.,][]{vovk2012conditional,lei2014distribution,foygel2021limits}. Consequently, conformal methods are typically designed to seek informative and feature-adaptive prediction sets, while guaranteeing marginal coverage.

\subsubsection{Prediction sets from tests of exchangeability} \label{sec:exchangeability-test}

Conformal prediction operates by building and inverting a test for the null hypothesis $\mathcal{H}_{n+1}$ that the full dataset $Z_1, \ldots, Z_{n+1}$ is exchangeable.
 Since the label $Y_{n+1}$ is unobserved and thus plays a distinguished role, it is helpful to emphasize the dependence of various quantities of interest on its hypothetical value $\tilde{y} \in \mathcal{Y}$. We introduce a function $p:(\tilde{y}; \mathbf{Z}_{1:n}, X_{n+1}) \mapsto [0,1]$ whose role is to quantify the evidence against $\mathcal{H}_{n+1}$ contained in the hypothesized unordered dataset
\[
D(\tilde{y}) := \{Z_1,\ldots,Z_n,(X_{n+1},\tilde{y})\}.
\]
This $p$-function is designed, as detailed below, in such a way that evaluating it at the true (random) test label $Y_{n+1}$ gives a conformal $p$-value $p(Y_{n+1}; \mathbf{Z}_{1:n}, X_{n+1})$: a statistic that is marginally super-uniform under $\mathcal{H}_{n+1}$; that is, if the full data are exchangeable,
\begin{align} \label{eq:pval-su}
  \P{ p(Y_{n+1}; \mathbf{Z}_{1:n}, X_{n+1}) \le \alpha } \le \alpha, \qquad \forall \alpha \in (0,1).
\end{align}

The $\alpha$-level conformal prediction set for $Y_{n+1}$ is the set of labels $\tilde{y} \in \mathcal{Y}$ for which the evidence contained in $D(\tilde{y})$ would be insufficient to reject $\mathcal{H}_{n+1}$ at level $\alpha$; i.e.,
\begin{align} \label{eq:conf-pred-set}
C_\alpha(X_{n+1}; \mathbf{Z}_{1:n}) := \left\{ \tilde{y} \in \mathcal{Y} : p(\tilde{y}; \mathbf{Z}_{1:n}, X_{n+1}) > \alpha \right\}.
\end{align}
In other words, $C_\alpha(X_{n+1}; \mathbf{Z}_{1:n})$ is the acceptance region for this test of $\mathcal{H}_{n+1}$.\footnote{Sometimes, conformal prediction is explained as testing ``random hypotheses'' of the type $\mathcal{H}_{n+1}(\tilde{y}): Y_{n+1}=\tilde{y}$, using $p(\tilde{y}; \mathbf{Z}_{1:n}, X_{n+1})$ as a ``$p$-value'' for $\mathcal{H}_{n+1}(\tilde{y})$. However, that interpretation is not entirely rigorous because $\mathcal{H}_{n+1}(\tilde{y})$ is an event, not a hypothesis, and $p(\tilde{y}; \mathbf{Z}_{1:n}, X_{n+1})$ need not be super-uniform under $\mathcal{H}_{n+1}$ for a fixed $\tilde{y}$.
}

Testing $\mathcal{H}_{n+1}$ is a theoretical device in this context; we compute the acceptance region but we never apply the test, for two reasons. Firstly, the test's decision depends on the unobserved $Y_{n+1}$; secondly, the null hypothesis $\mathcal{H}_{n+1}$ is assumed true from the moment one decides to apply conformal prediction, and the goal is not to disprove it.\footnote{Using the conformal $p$-value $p(Y_{n+1}; \mathbf{Z}_{1:n}, X_{n+1})$ to disprove $\mathcal{H}_{n+1}$ is the goal in a different, related class of outlier detection problems, discussed in Section~\ref{sec:outlier-detection}.}
Nonetheless, this imaginary hypothesis test is useful to prove that $C_\alpha(X_{n+1}; \mathbf{Z}_{1:n})$ has valid $1-\alpha$ marginal coverage for $Y_{n+1}$.
In fact, marginal coverage follows directly from Eq.~\ref{eq:pval-su}--\ref{eq:conf-pred-set}, since
$Y_{n+1} \notin C_\alpha(X_{n+1}; \mathbf{Z}_{1:n})$ if and only if $p(Y_{n+1}; \mathbf{Z}_{1:n}, X_{n+1}) \le \alpha$.
This is an instance of the general duality, via test inversion, between hypothesis tests and prediction (or confidence) regions.

If the outcome space $\mathcal{Y}$ is finite, $C_\alpha(X_{n+1}; \mathbf{Z}_{1:n})$ can be constructed in practice by evaluating the $p$-function explicitly for each hypothetical value $\tilde{y}$.
Analytical simplifications are in many cases possible and sometimes necessary, for example if $\mathcal{Y}$ is uncountable.

\subsubsection{Nonconformity scores and conformal $p$-functions} \label{sec:nonconf-scores}

The $p$-function compares the nonconformity of the hypothesized case $(X_{n+1},\tilde{y})$ with that of the other observations $Z_i \in D(\tilde{y})$, for $i \in [n]$,
where in each case nonconformity is measured relative to $D(\tilde{y})$. This is quantified by a score function
$s: \mathcal{Z} \times \mathcal{Z}^{n+1} \to \mathbb{R}$, invariant to permutations of its second argument
and designed so that $s(z; D(\tilde{y}))$ assigns larger values to more atypical observations $z \in \mathcal{Z}$.

The conformal $p$-function at $\tilde{y}$ is then defined as the relative rank of the hypothesized test score $s((X_{n+1},\tilde{y}); D(\tilde{y}))$ among the scores $s((X_{i},Y_{i}); D(\tilde{y}))$ of all reference observations:
\begin{align} \label{eq:def-pfunc}
p(\tilde{y}; \mathbf{Z}_{1:n}, X_{n+1})
= \frac{1 + \sum_{i=1}^n \I{ s\bigl((X_{n+1},\tilde{y}); D(\tilde{y})\bigr)
\le s\bigl(Z_i; D(\tilde{y})\bigr) }}{n+1}.
\end{align}
This takes values in $\{1/(n+1), 2/(n+1), \ldots, 1\}$, with smaller values for more unusual hypothesized test cases $(X_{n+1},\tilde{y})$.

It is readily verified that this construction yields valid conformal $p$-values.
\begin{theorem} \label{thm:exch-super-unif}
If $\mathbf{Z}_{1:(n+1)}$ are exchangeable,
$ \P{ p(Y_{n+1}; \mathbf{Z}_{1:n}, X_{n+1}) \le \alpha } \le \alpha$, $\forall \alpha \in (0,1)$.
\end{theorem}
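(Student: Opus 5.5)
The plan is the standard rank argument, exploiting the fact that the score function only sees the unordered bag $D(Y_{n+1}) = \Lbag \mathbf{Z}_{1:(n+1)} \Rbag$.

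First, set $\tilde{y} = Y_{n+1}$ and define the scores $S_i := s(Z_i; \Lbag \mathbf{Z}_{1:(n+1)} \Rbag)$ for all $i \in [n+1]$. Because $s$ is invariant to permutations of its second argument, the vector $(S_1,\ldots,S_{n+1})$ is obtained by applying one fixed measurable map $\mathbf{Z}_{1:(n+1)} \mapsto (S_1,\ldots,S_{n+1})$. This map commutes with permutations: permuting the $Z_i$ permutes the $S_i$ in the same way. Exchangeability of $\mathbf{Z}_{1:(n+1)}$ therefore implies that $(S_1,\ldots,S_{n+1})$ is exchangeable.

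Second, rewrite the conformal $p$-value in Eq.~\ref{eq:def-pfunc} as
\[
p(Y_{n+1}; \mathbf{Z}_{1:n}, X_{n+1}) = \frac{1}{n+1}\sum_{i=1}^{n+1} \I{S_{n+1} \le S_i},
\]
where the term $i=n+1$ accounts for the leading $1$. More generally, define $R_j := \sum_{i=1}^{n+1} \I{S_j \le S_i}$, the number of scores at least as large as $S_j$. By exchangeability of the scores, $R_{n+1}$ has the same distribution as $R_j$ for every $j$. Hence
\[
\P{R_{n+1} \le k} = \frac{1}{n+1}\,\mathbb{E}\Bigl[\sum_{j=1}^{n+1} \I{R_j \le k}\Bigr].
\]

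Third, which I expect to be the only delicate step because of ties, I would prove the deterministic bound $\#\{j : R_j \le k\} \le k$ for any reals $S_1,\ldots,S_{n+1}$ and any integer $k$. The argument goes as follows. Suppose $m$ indices satisfy $R_j \le k$, and let $j^\ast$ be the one with the smallest score among them. Every one of these $m$ indices has score $\ge S_{j^\ast}$, so $R_{j^\ast} \ge m$, which gives $m \le R_{j^\ast} \le k$. Ties only enlarge ranks, so they cause no trouble.

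To conclude, take $k = \lfloor \alpha(n+1) \rfloor$. Then
\[
\P{p \le \alpha} = \P{R_{n+1} \le k} \le \frac{k}{n+1} \le \alpha,
\]
which proves the claim.
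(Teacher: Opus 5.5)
Your proof is correct and follows essentially the same route as the paper: the permutation-invariant $s$ makes the scores exchangeable, and a symmetry argument over which position the test score occupies then gives super-uniformity. Your identity $\mathbb{P}[R_{n+1}\le k] = \frac{1}{n+1}\mathbb{E}[\#\{j : R_j \le k\}]$ and the deterministic bound $\#\{j : R_j \le k\}\le k$ are the unconditional form of the paper's remark that the test score is uniform over the bag of scores. They also spell out the handling of ties, which the paper's proof leaves implicit.
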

\begin{proof}
\sloppy Because $D(Y_{n+1})$ is invariant to permutations, under $\mathcal{H}_{n+1}$ the scores $s(Z_1; D(Y_{n+1})), \ldots, s(Z_n; D(Y_{n+1})), s((X_{n+1},Y_{n+1}); D(Y_{n+1}))$ are exchangeable.
Consequently, if the scores are almost-surely (a.s.)\ distinct, the rank of the last among the $n+1$ values is uniformly distributed on $\{1,\ldots,n+1\}$, which implies Eq.~\ref{eq:pval-su}.
In general, a careful definition of the rank is needed to ensure exact uniformity in the presence of ties. However, conditional on the bag of scores, $s((X_{n+1},Y_{n+1}); D(Y_{n+1}))$ is still uniformly distributed over the bag, which implies that the $p$-value is valid.
\end{proof}

\subsubsection{Connection to pivots} \label{sec:foundations-pivots}

Exchangeability connects conformal prediction to the classical notion of pivots, and specifically also to rank and permutation tests \citep{kuchibhotla2020exchangeability}.
In general, pivots are useful because they can be inverted  to construct confidence sets, and this forms the basis of many common intervals, such as the $t$-interval.
\begin{marginnote}[]
\entry{Pivot}{
Function of data and population parameters whose distribution is known even if the population itself is unknown; see e.g., \cite{cox1979theoretical}. It can be inverted to construct confidence or prediction sets.
}
\end{marginnote}

In conformal prediction, the vector $\mathbf{Z}_{1:(n+1)}$ is a conditional pivot, whose distribution given the bag $\Lbag \mathbf{Z}_{1:(n+1)} \Rbag$ does not depend on any population parameters and is fully known: it is uniform over all permutations of the values in the bag.
The inversion of this conditional pivot, which depends on the unknown outcome $Y_{n+1}$ through $Z_{n+1}$, is precisely what gives the conformal prediction set for $Y_{n+1}$.

This perspective can be pushed further to construct joint coverage regions simultaneously for parameters and outcomes \citep{dobriban2023joint}.

\subsubsection{Marginal coverage: strengths and limitations}
We have seen how the marginal coverage of conformal prediction generally enjoys a finite-sample lower bound. There is also an (almost) matching upper bound, as long as the nonconformity scores are almost-surely distinct. 
The latter is a very mild assumption because any ties can always be broken at random, independent of the data.
This result has a long history and versions of it have appeared in \citet{wilks1941determination,vovk1999machine,vovk2022algorithmic}, and \citet{lei2013distribution}.
\begin{marginnote}[]
\entry{Tie-breaking}{
Nonconformity scores can typically be assumed to be almost surely unique, since ties can be broken by adding independent random noise to the score function.
}
\end{marginnote}
\begin{theorem} \label{thm:general-conformal}
If $Z_1, \ldots, Z_{n+1}$ are exchangeable, for any score function $s: \mathcal{Z} \times \mathcal{Z}^{n+1} \to \mathbb{R}$ that is invariant to permutations of its second argument, and any $\alpha \in (0,1)$, the conformal prediction sets given by Eq.~\ref{eq:conf-pred-set}--\ref{eq:def-pfunc} have marginal coverage of at least $1-\alpha$.
Moreover, if $s(Z_1;D(Y_{n+1})),\ldots,s(Z_{n+1};D(Y_{n+1}))$ are a.s.~distinct, then
\begin{align} \label{eq:coverage-two-sided}
1-\alpha \le \P{Y_{n+1} \in C_\alpha(X_{n+1}; \mathbf{Z}_{1:n})} \le 1-\alpha + \frac{1}{n+1}.
\end{align}
\end{theorem}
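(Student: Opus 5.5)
The lower bound follows at once from Theorem~\ref{thm:exch-super-unif} combined with the duality in Eq.~\ref{eq:conf-pred-set}. By construction, $Y_{n+1} \notin C_\alpha(X_{n+1}; \mathbf{Z}_{1:n})$ if and only if $p(Y_{n+1}; \mathbf{Z}_{1:n}, X_{n+1}) \le \alpha$. Therefore
\[
\P{Y_{n+1} \in C_\alpha(X_{n+1}; \mathbf{Z}_{1:n})} = 1 - \P{p(Y_{n+1}; \mathbf{Z}_{1:n}, X_{n+1}) \le \alpha} \ge 1-\alpha .
\]
This part needs no assumption about ties, since Theorem~\ref{thm:exch-super-unif} holds in general.

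For the upper bound, the plan is to compute the distribution of the conformal $p$-value exactly under the distinctness assumption. I would write $S_i := s(Z_i; D(Y_{n+1}))$ for $i \in [n+1]$. The bag $D(Y_{n+1})$ is invariant to permutations of $\mathbf{Z}_{1:(n+1)}$, so exchangeability of $\mathbf{Z}_{1:(n+1)}$ gives exchangeability of $(S_1,\ldots,S_{n+1})$; this is the same step used in the proof of Theorem~\ref{thm:exch-super-unif}. On the almost-sure event that the scores are distinct, the quantity $R := 1 + \sum_{i=1}^n \I{S_{n+1} \le S_i}$ is the rank of $S_{n+1}$ among the $n+1$ scores, counted from the largest. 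I would then show that $R$ is exactly uniform on $\{1,\ldots,n+1\}$. Conditional on the bag $\Lbag S_1,\ldots,S_{n+1}\Rbag$, exchangeability makes $S_{n+1}$ equally likely to be any of the $n+1$ distinct values. More concretely, swapping indices $n+1$ and $j$ preserves the joint law, so $\P{S_{n+1} \text{ is the } k\text{-th largest}}$ does not depend on which index holds the $k$-th largest value, and these $n+1$ probabilities sum to one. Hence $p(Y_{n+1}; \mathbf{Z}_{1:n}, X_{n+1}) = R/(n+1)$ is uniform on $\{1/(n+1),\ldots,1\}$.

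Given this, the coverage is an exact count:
\[
\P{p(Y_{n+1}; \mathbf{Z}_{1:n}, X_{n+1}) > \alpha} = \frac{\#\{k \in [n+1] : k > \alpha(n+1)\}}{n+1} = \frac{n+1-\lfloor \alpha(n+1)\rfloor}{n+1}.
\]
Since $\lfloor \alpha(n+1)\rfloor > \alpha(n+1)-1$, this quantity is less than $1-\alpha + 1/(n+1)$, which gives the upper bound. Using $\lfloor \alpha(n+1)\rfloor \le \alpha(n+1)$ recovers the lower bound as a consistency check.

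The main obstacle is making the exact uniformity rigorous. The scores are not a function of each $Z_i$ alone; each depends on the whole bag. I would handle this by writing the score vector as a permutation-equivariant map of $\mathbf{Z}_{1:(n+1)}$, namely $g(\mathbf{z})_i = s(z_i; \Lbag \mathbf{z} \Rbag)$. Then $g(\mathbf{Z}_{\pi}) = g(\mathbf{Z})_{\pi}$ for every permutation $\pi$, and exchangeability transfers from the data to the scores. The rest is routine once distinctness removes ties.
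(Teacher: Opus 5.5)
Your proposal is correct and follows essentially the same route as the paper. The paper gives no separate proof of this theorem: it obtains the lower bound from the duality $Y_{n+1} \notin C_\alpha(X_{n+1}; \mathbf{Z}_{1:n}) \iff p(Y_{n+1}; \mathbf{Z}_{1:n}, X_{n+1}) \le \alpha$ together with Theorem~\ref{thm:exch-super-unif}, and the upper bound from the exact uniformity of the test score's rank when scores are a.s.\ distinct; your permutation-equivariance argument and the explicit count $n+1-\lfloor \alpha(n+1)\rfloor$ make those steps fully rigorous.
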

The upper bound in Theorem~\ref{thm:general-conformal} shows that coverage converges to $1-\alpha$ at the fast rate $\mathcal{O}(1/n)$. This is very efficient relative to most estimation tasks, for which typically the error converges no faster than $\mathcal{O}(1/\sqrt{n})$ due to the central limit theorem.

However, not all prediction sets satisfying Eq.~\ref{eq:coverage-two-sided} are equally informative. For instance, marginal coverage can be achieved by the trivial prediction set
\begin{align} \label{eq:pred-set-trivial}
  C_\alpha^{\text{trivial}}(X_{n+1}; \mathbf{Z}_{1:n})
  = \begin{cases}
    \mathcal{Y}, & \text{if } \xi_{n+1} \le 1-\alpha, \\
    \emptyset, & \text{otherwise},
  \end{cases}
\end{align}
where the features $X$ are augmented with independent noise $\xi \sim \mathrm{Uniform}(0,1)$, to enable randomization. Despite having exact $1-\alpha$ coverage, this set is uninformative.

This counterexample shows why marginal coverage is not the ultimate goal of conformal prediction. It is rather a basic sanity check, while the real challenge is to construct informative prediction sets. Achieving this often requires careful design of the nonconformity score and, sometimes, more sophisticated methods. We return to this later.

\subsection{Illustration: Predicting a Continuous Scalar Variable} \label{sec:illustrative-numerical}

To build intuition, we begin by studying a simple problem where the goal is to construct a one-sided prediction interval for a continuous outcome without using feature information.

Suppose the distribution of $Y$ is supported on $\mathcal{Y} = \mathbb{R}$ without point masses, and $X = 1$ almost surely, so the features can be ignored.
We focus on constructing a one-sided prediction interval
$C_\alpha(X_{n+1}; \mathbf{Z}_{1:n}) = (-\infty, U_\alpha(\mathbf{Y}_{1:n})]$, with upper bound
$U_\alpha(\mathbf{Y}_{1:n})$.
The marginal coverage objective is $\P{Y_{n+1} \le U_\alpha(\mathbf{Y}_{1:n})} \ge 1 - \alpha$.
Despite its simplicity, this example captures some of the essential mechanics of conformal prediction.

\begin{textbox}[b]
\section{Motivating example: one-sided clinical reference range for cardiac troponin}
In emergency departments, cardiac troponin is measured to assess myocardial injury.
Among healthy individuals, levels are typically very low or undetectable, and concern arises only when values are unusually high.
Conformal prediction can be used to construct an upper reference bound such that the probability a new healthy patient exceeds this threshold is at most $\alpha$.
\end{textbox}

\subsubsection{Construction using conformal $p$-values}
A natural implementation of the framework from Section~\ref{sec:conformal-fundamentals} uses the score function $s((x,y); D(\tilde{y})) = y$ for $y \in \mathcal{Y}$, ignoring $x$ and $D(\tilde{y})$.
With this choice, the $p$-function (Eq.~\ref{eq:def-pfunc}) reduces to $p(\tilde{y}; \mathbf{Z}_{1:n}, X_{n+1}) = [R(\tilde{y}; \mathbf{Y}_{1:n})+1]/(n+1)$, where $R(\tilde{y}; \mathbf{Y}_{1:n}) = \sum_{i=1}^n \I{ \tilde{y} \le Y_i }$ counts the number of observations in $\mathbf{Y}_{1:n}$ greater than or equal to $\tilde{y}$.
Then, the conformal prediction set (Eq.~\ref{eq:conf-pred-set}) becomes
\begin{align} \notag
  C_\alpha(X_{n+1}; \mathbf{Z}_{1:n})
  &= \{ \tilde{y} : R(\tilde{y}; \mathbf{Y}_{1:n}) \ge \lfloor \alpha (n+1) \rfloor \}
   = (-\infty, U_{\alpha}(\mathbf{Y}_{1:n})],\\
  U_{\alpha}(\mathbf{Y}_{1:n})
  &= \text{the } \lceil (1-\alpha)(n+1) \rceil\text{-th smallest element of } \{Y_1,\ldots,Y_n,+\infty\}.
\label{eq:scalar-U}
\end{align}
This is one of many cases where the construction outlined in Eq.~\ref{eq:conf-pred-set}--\ref{eq:def-pfunc} simplifies analytically.

One may wonder why the score function $s((x,y);D(\tilde{y}))=y$ in this example is independent of $D(\tilde{y})$, a choice that substantially streamlines the construction of the conformal prediction set.
This simplification is possible because the numerical outcomes $Y$ already have a natural ordering.
In Section~\ref{sec:illustrative-categorical}, we will see an example with categorical data where a more complicated score function is needed.

\subsubsection{Quantile-based characterization}
Conformal prediction intervals are sometimes presented from a different perspective.
For any $\tau \in [0,1]$, let $Q(\hat{P}(\mathbf{Y}_{1:n}); \tau)$ denote the $\tau$-quantile of the empirical distribution of $\mathbf{Y}_{1:n}$.
Then, the upper bound $U_{\alpha}(\mathbf{Y}_{1:n})$ in Eq.~\ref{eq:scalar-U} can be equivalently written as
\begin{align*}
  U_{\alpha}(\mathbf{Y}_{1:n})
    = Q\left( \hat{P}(\mathbf{Y}_{1:n}); (1-\alpha)\left(1+1/n\right) \right).
\end{align*}
\begin{marginnote}[]
\entry{Quantiles}{
For any distribution $P$ on $\mathbb{R}$ and scalar $\tau \in [0,1]$, the $\tau$-quantile of $P$ is:
$Q(P; \tau) := \inf \left\{ u \in \mathbb{R} : P(u) \geq \tau \right\}$,
where $P(u)$ is the CDF of $P$ at $u$.
For $\tau > 1$, by convention $Q(P; \tau) = +\infty$.
}
\end{marginnote}
This reveals the close connection in this example between the one-sided conformal prediction interval and an ideal interval with knowledge of the true population distribution $P^*$.
All non-randomized one-sided intervals with valid coverage, and
which may depend on $P^*$,  must include the
 one-sided oracle interval $C^*_\alpha = (-\infty, U^*_\alpha]$, where $U^*_\alpha = Q(P^*; 1-\alpha)$.
 The conformal interval is very similar to the empirical plug-in analogue $Q(\hat{P}(\mathbf{Y}_{1:n}); 1-\alpha)$, except that it
 evaluates the empirical quantile at the slightly inflated level $(1-\alpha)(1+1/n)$.
 The following result on the out-of-sample behavior of empirical quantiles provides a direct justification, from this perspective, for the finite-sample coverage of the conformal method.
\begin{theorem} \label{thm:general-conformal2}
\sloppy If $Y_1, \ldots, Y_{n+1} \in \mathbb{R}$ are exchangeable, then for any $\alpha \in (0,1)$, $\mathbb{P}[Y_{n+1} \le Q( \hat{P}(\mathbf{Y}_{1:n}); (1-\alpha)(1+1/n) ) ] \geq 1-\alpha$.
Moreover, if $Y_1, \ldots, Y_{n+1}$ are a.s.~distinct,
\begin{align*}
1-\alpha \le \P{ Y_{n+1} \le Q\left( \hat{P}(\mathbf{Y}_{1:n}); (1-\alpha)\left(1+1/n\right) \right)} 
\le 1-\alpha + \frac{1}{n+1}.
\end{align*}
\end{theorem}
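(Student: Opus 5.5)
The plan is to reduce Theorem~\ref{thm:general-conformal2} to Theorem~\ref{thm:general-conformal}. We use the special case of the score $s((x,y);D(\tilde{y}))=y$ and $X\equiv 1$, so that $Z_i$ may be identified with $Y_i$. We then show that the inflated empirical quantile coincides with the upper endpoint $U_\alpha(\mathbf{Y}_{1:n})$ in Eq.~\ref{eq:scalar-U}. Once this identity holds, the event $\{Y_{n+1}\le Q(\hat P(\mathbf{Y}_{1:n});(1-\alpha)(1+1/n))\}$ equals $\{Y_{n+1}\in C_\alpha(X_{n+1};\mathbf{Z}_{1:n})\}$. Both the lower bound and the two-sided bound then follow from Theorem~\ref{thm:general-conformal}. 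If the $Y_i$ are a.s.\ distinct, the scores $s(Z_i;D(Y_{n+1}))=Y_i$ are a.s.\ distinct, which is exactly that theorem's hypothesis for Eq.~\ref{eq:coverage-two-sided}.

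Since the derivation of Eq.~\ref{eq:scalar-U} in the paper assumed a continuous distribution, I would re-derive it for general exchangeable data with ties. The $p$-function is $p(\tilde{y})=[1+R(\tilde{y})]/(n+1)$, where $R(\tilde{y})=\#\{i\le n: Y_i\ge \tilde{y}\}$. Hence $\tilde{y}\in C_\alpha$ iff $R(\tilde{y})>\alpha(n+1)-1$, i.e.\ iff $R(\tilde{y})\ge \lfloor \alpha(n+1)\rfloor$, using integrality of $R$. Let $k=\lceil(1-\alpha)(n+1)\rceil$ and let $Y_{(1)}\le\dots\le Y_{(n)}$ be the order statistics, with $Y_{(n+1)}:=+\infty$.

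\begin{itemize}
\item If $k\le n$, then $R(Y_{(k)})\ge n-k+1\ge \lfloor\alpha(n+1)\rfloor$, since $n+1-k=\lfloor \alpha(n+1)\rfloor$ when $\alpha(n+1)$ is handled via $\lceil\cdot\rceil$ and $\lfloor\cdot\rfloor$ complementarity. Conversely, any $\tilde{y}>Y_{(k)}$ has $R(\tilde{y})\le n-k<\lfloor\alpha(n+1)\rfloor$. So $C_\alpha=(-\infty,Y_{(k)}]$.
\item If $k=n+1$, then $C_\alpha=\mathbb{R}$.
\end{itemize}

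The monotonicity of $R$ (nonincreasing) ensures the set is a half-line. I must double-check the identity $n+1-\lceil(1-\alpha)(n+1)\rceil=\lfloor\alpha(n+1)\rfloor$, which holds because $\lceil m-x\rceil=m-\lfloor x\rfloor$ for integer $m$.

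Next, compute the empirical quantile: $Q(\hat P;\tau)=\inf\{u:\#\{i:Y_i\le u\}\ge n\tau\}$, which equals $Y_{(\lceil n\tau\rceil)}$ for $0<\tau\le 1$ and $+\infty$ for $\tau>1$ by convention. With $\tau=(1-\alpha)(n+1)/n$ we get $n\tau=(1-\alpha)(n+1)$, so $\lceil n\tau\rceil=k$. Moreover $\tau>1$ iff $(1-\alpha)(n+1)>n$ iff $k=n+1$. Thus $Q=Y_{(k)}$ in both cases, matching $U_\alpha$.

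The main obstacle is just this bookkeeping with ceilings, floors and ties, particularly the boundary case $\tau>1$. Each case is elementary.

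As a self-contained alternative, the argument can also be made directly, as in the proof of Theorem~\ref{thm:exch-super-unif}. $Y_{n+1}>Y_{(k)}$ implies that $Y_{n+1}$ is strictly greater than at least $k$ of the $n+1$ values. By exchangeability, conditional on the bag, $Y_{n+1}$ is uniform over the bag. Hence the probability of this event is at most $(n+1-k)/(n+1)\le\alpha$. With distinct values it equals $(n+1-k)/(n+1)\ge \alpha-1/(n+1)$, which gives the upper bound.
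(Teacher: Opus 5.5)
Your proposal is correct and follows the route the paper implicitly takes. The paper states Theorem~\ref{thm:general-conformal2} without a separate proof, presenting it as the empirical-quantile restatement of the coverage of the one-sided conformal interval: it combines the identity $U_\alpha(\mathbf{Y}_{1:n}) = Q(\hat{P}(\mathbf{Y}_{1:n}); (1-\alpha)(1+1/n))$ with Theorem~\ref{thm:general-conformal}, which is exactly your reduction. Your extra work fills in what the paper leaves implicit: you re-derive that identity allowing ties and the boundary case $\tau>1$ (the paper derives Eq.~\ref{eq:scalar-U} in a setting without point masses), and your direct bag-uniformity argument also supplies the upper bound that Theorem~\ref{thm:general-conformal} asserts without proof.
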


As $n$ increases, $U_{\alpha}(\mathbf{Y}_{1:n})$ converges almost surely to $U^*_{\alpha}$ by the Glivenko--Cantelli theorem, so the conformal prediction intervals are consistent with the oracle. Classical asymptotic theory would tell us that the empirical cumulative distribution function (CDF) converges to the population CDF at rate $\mathcal{O}(1/\sqrt{n})$. However, conformal prediction achieves the nominal $1-\alpha$ coverage from above at the faster rate $\mathcal{O}(1/n)$.
This highlights an important insight: prediction with marginal coverage can be statistically easier than estimation of population quantiles.

\subsubsection{Empirical study}

\appsec~\ref{app:illustrative-numerical} presents simulation studies illustrating the finite-sample, distribution-free validity and efficiency of one-sided conformal prediction intervals. Across a range of data-generating distributions and sample sizes, conformal intervals maintain nominal coverage while rapidly approaching oracle performance. In contrast, heuristic plug-in approaches and classical parametric methods can substantially under- or over-cover depending on sample size and model misspecification.

\subsection{Illustration: Predicting a Categorical Variable} \label{sec:illustrative-categorical}

We now turn to a second example where, despite the continued absence of informative features, it becomes necessary to use a more sophisticated score function $s(z; D(\tilde{y}))$ that depends nontrivially on its second argument, the reference dataset $D(\tilde{y})$.

In this example the outcome $Y$ is categorical, taking values from a finite dictionary $\mathcal{Y}$ of known cardinality $K \geq 1$, while the features (denoted here as $\xi$) are $\text{Uniform}(0,1)$ random variables, independent of $Y$ and thus uninformative.
Without loss of generality, we can set $\mathcal{Y} = [K] = \{1,\ldots,K\}$, with an arbitrary ordering of the labels.
The goal is to construct a small prediction set $C_\alpha(\xi_{n+1}; \mathbf{Z}_{1:n}) \subseteq [K]$ for $Y_{n+1}$ satisfying marginal coverage at level $1-\alpha$.
Because this problem is more subtle than the one from Section~\ref{sec:illustrative-numerical}, we discuss the ideal oracle approach before explaining the conformal solution.

\begin{textbox}[b]
\section{Motivating example: categorical prediction in differential diagnosis}
In emergency departments, patients with different underlying conditions may present the same initial symptoms (e.g., acute chest pain). In this setting, $Y_i$ denotes the diagnosis for patient $i$, taking values in a finite dictionary $\mathcal{Y}$ (e.g., myocardial infarction, pulmonary embolism, pneumonia, or non-cardiac chest pain), while patient features are ignored. For a new patient, the clinical task is not to identify a single diagnosis immediately, but to construct a small set $C_\alpha \subseteq \mathcal{Y}$ of plausible diagnoses likely to contain the true condition.
\end{textbox}

\subsubsection{The oracle approach} \label{sec:illustrative-categorical-oracle}
Let $P^* = (\pi^*_1, \ldots, \pi^*_K)$ denote the (unknown) population distribution of $Y$, where $\pi^*_y = \P{Y = y}$ for $y \in [K]$. For simplicity, assume these probabilities are distinct, so that the oracle prediction set is uniquely defined. Knowing $P^*$, an oracle could construct the most informative prediction set $C^*_\alpha$ with marginal coverage for $Y_{n+1}$ by selecting the smallest subset of labels whose total probability mass is at least $1 - \alpha$; this is obtained by sorting $(\pi^*_1, \ldots, \pi^*_K)$ in decreasing order.
See \appsec~\ref{app:illustrative-categorical} for details on how these oracle prediction sets can be made even smaller through randomization.

\subsubsection{Oracle-inspired conformal prediction sets} \label{sec:illustrative-categorical-estim}
The oracle sorts the labels based on $P^*$, which is unknown in practice.
This is the key difference between this example and the one from Section~\ref{sec:illustrative-numerical}, where the candidate outcomes were
sorted based on their known numerical values.
This suggests that conformal prediction in this example involves an additional complication: one must design a score function $s(z; D(\tilde{y}))$ that computes and suitably leverages an empirical estimate of $P^*$ using the available data in $D(\tilde{y}) := \{Z_1,\ldots,Z_n,(\xi_{n+1},\tilde{y})\}$.
\begin{marginnote}[]
\entry{Seek the oracle}{To obtain informative prediction sets, conformal prediction is often implemented using scores inspired by how an ideal oracle would behave if it knew the true data-generating distribution.}
\end{marginnote}

For a hypothesized dataset $D(\tilde{y})$, with $\tilde{y} \in [K]$, the corresponding maximum-likelihood (or plug-in)
estimate of $\pi^*_y$ under the general multinomial model, for $y \in [K]$, is
\begin{align*}
  \hat{\pi}_y(D(\tilde{y}))
  & := \frac{1}{n+1} \left( \sum_{i=1}^{n} \I{Y_i=y} + \I{y=\tilde{y}} \right)
   = \frac{n}{n+1} \frac{n_y}{n} + \frac{\I{y=\tilde{y}}}{n+1},
\end{align*}
where $n_y$ counts the observations with label $y$ in the observed data $\mathbf{Y}_{1:n}$.

Since $s\big((\xi,y); D(\tilde{y})\big)$ is intended to quantify the nonconformity of label $y$ relative to $D(\tilde{y})$, it should take smaller values for more frequent labels. This motivates using the negative empirical class probability, $-\hat{\pi}_y(D(\tilde{y}))$, as the main component of the score function.
Additionally, using the uninformative features $\xi$ in $z=(\xi,y)$, we include a small tie-breaking term equal to $-(\xi/2)/(n+1)$, leading to
\begin{align*} 
  s\left( (\xi,y); D(\tilde{y}) \right) = - \hat{\pi}_{y}(D(\tilde{y})) - \frac{\xi/2}{n+1}.
\end{align*}
The tie-breaking term affects the ordering of scores only when labels have identical frequencies within $D(\tilde{y})$, and it is useful to ensure that $s(Z_1;D(Y_{n+1})),\ldots,s(Z_{n+1};D(Y_{n+1}))$ are a.s.~distinct, which by Theorem~\ref{thm:general-conformal} is sufficient 
to achieve $\mathcal{O}(1/n)$-tight marginal coverage at level $1-\alpha$ for any $K$ and $\alpha$. This choice of $s$ leads to a conformal $p$-function of the form
\begin{align*}
  p(\tilde{y}; \mathbf{Z}_{1:n}, \xi_{n+1})
  & = \frac{1}{n+1} \left( 1 + \sum_{k=1}^{n_{\tilde{y}}} k |\Gamma_k| - n_{\tilde{y}} + \sum_{i \in I(\tilde{y})} \I{ \xi_{n+1} \geq \xi_i } \right),
\end{align*}
where $\Gamma_k = \{l \in [K]: n_l = k\}$ are the labels observed exactly $k$ times in $\mathbf{Y}_{1:n}$, and $I(\tilde{y}) = \{i \in [n]: Y_i=\tilde{y}\} \cup \{i \in [n]: n_{Y_i} = n_{\tilde{y}} + 1\}$ is the set of indices corresponding to observations with label $\tilde{y}$ or label frequency $n_{\tilde{y}}+1$.
Although this expression may seem intimidating, it is easy to compute and can be understood by focusing on special cases.

\sloppy For an unseen label $\tilde{y} \in \Gamma_0$, $p(\tilde{y}; \mathbf{Z}_{1:n}, \xi_{n+1}) = ( 1 + \sum_{i : Y_i \in \Gamma_1} \I{ \xi_{n+1} \geq \xi_i } ) / (n+1) \sim \text{Unif} \left( [|\Gamma_1|+1] \right) / (n+1)$. Therefore, $\tilde{y} \in C_\alpha(\xi_{n+1}; \mathbf{Z}_{1:n})$ if and only if $p(\tilde{y}; \mathbf{Z}_{1:n}, \xi_{n+1}) > \alpha$, which in this case requires $|\Gamma_1| \geq \lfloor \alpha (n+1) \rfloor$. This reveals a connection between conformal prediction and the classical Good--Turing estimator of the missing mass \citep{good1953population}; see also \citet{xie2025openset} for a similar connection.

If every observation has label $\tilde{y} \in \Gamma_n$, then $p(\tilde{y}; \mathbf{Z}_{1:n}, \xi_{n+1}) = ( 1 + \sum_{i=1}^{n} \I{ \xi_{n+1} \geq \xi_i } ) / (n+1) \sim \text{Unif} \left( [n+1] \right) / (n+1)$, so $\tilde{y} \in C_\alpha(\xi_{n+1}; \mathbf{Z}_{1:n})$  with probability at least $1-\alpha$.

Although we do not prove it formally here, these conformal prediction sets are asymptotically consistent with a randomized version of the oracle from Section~\ref{sec:illustrative-categorical-oracle}. The argument starts by noting that $\hat{\pi}_y(D(\tilde{y})) \overset{\text{p}}{\to} \pi^*_y$ for all $y,\tilde{y} \in [K]$, by the law of large numbers, from which it follows that $p(\tilde{y}; \mathbf{Z}_{1:n}, \xi_{n+1}) \overset{\text{d}}{\to} p^*(\tilde{y}, \xi) = \sum_{k=r(\tilde{y})+1}^{K} \pi^*_{(k)} + \pi^*_{\tilde{y}} \xi$ as $n \to \infty$, where $\xi \sim \text{Uniform}(0,1)$ independent of everything else and $r(\tilde{y})$ is the rank of $\pi_{\tilde{y}}^*$ among the distinct sorted class probabilities $\pi_{(1)}^*>\cdots>\pi_{(K)}^*$.

\subsubsection{Empirical study}
\appsec~\ref{app:illustrative-categorical} presents simulation studies for this example, demonstrating the finite-sample validity and efficiency of conformal prediction sets under varying degrees of class imbalance. Conformal prediction consistently maintains coverage while rapidly approaching oracle performance as the sample size increases. Classical plug-in and Bayesian approaches, by contrast, lack finite-sample frequentist guarantees and can substantially under- or over-cover depending on sample size and prior misspecification.

\subsection{The Full and Split Conformal Workflows}

In the examples above, the features $X$ were completely uninformative, and thus they were either ignored (Section~\ref{sec:illustrative-numerical}) or used solely to randomly break ties between nonconformity scores (Section~\ref{sec:illustrative-categorical}).
In general, however, features are informative, and therefore an effective nonconformity score function must carefully use them.
This is where machine learning models come into play, and there are two classical approaches to leveraging them.

\subsubsection{Full conformal}

Full conformal prediction is the most direct implementation of the framework described in Section~\ref{sec:conformal-fundamentals}, but also the most computationally expensive. Recall that $s((x,y); D(\tilde{y}))$ in Eq.~\ref{eq:def-pfunc} quantifies how unusual an observation $(x,y)$ appears relative to the hypothesized dataset $D(\tilde{y})$. In principle, $s$ may use the unordered data in $D(\tilde{y})$ in any way, including fitting a predictive model that learns the relation between $X$ and $Y$. The score can then be defined, for example, as a generalized residual comparing $y$ to its model-based prediction. Examples for regression and classification are given later.

\begin{marginnote}[]
\entry{Full (or transductive) conformal prediction}{
The model is refitted using the augmented reference data, separately for each test case and hypothesized label.
}
\end{marginnote}

The example of Section~\ref{sec:illustrative-categorical} is a special case, where a multinomial model is fitted by maximum likelihood. There, re-fitting the model for each hypothesized label $\tilde{y}$ is straightforward, but in general full conformal prediction can be prohibitively costly, especially when the predictive model is complex (e.g., a deep neural network) and $\tilde{y}$ may take uncountably many values. Several techniques are available to make full conformal prediction tractable in certain settings by exploiting model structure, see e.g., \cite{burnaev2014efficiency} and \cite{lei2019fast}. Nonetheless, in many applications, a faster approach is needed.

\subsubsection{Split conformal} \label{sec:split-conformal}

Split conformal prediction \citep{papadopoulos2002inductive} is related to the example from Section~\ref{sec:illustrative-numerical}, where the score function takes the form $s((x,y);D(\tilde{y}))=y$, ignoring $D(\tilde{y})$. In general split conformal prediction, $s((x,y);D(\tilde{y}))=s'(x,y)$, where $s': \mathcal{Z} \to \mathbb{R}$, can still be interpreted as computing generalized residuals, similar to full conformal prediction, but is based on a fixed predictive model, independent of $D(\tilde{y})$.

\begin{marginnote}[]
\entry{Split (or inductive) conformal prediction}{
The model is fitted on separate training data prior to conformal prediction.
}
\end{marginnote}

The term ``split conformal'' is used because in practice one may have only a single dataset (of size $N>n$), which must be randomly partitioned into a training subset (of size $N-n$), used to fit the model defining $s'$, and a calibration subset (of size $n$), used for conformal prediction; see Figure~\ref{fig:1} for a visualization of this workflow. As evaluating the nonconformity scores in this setting does not require the re-fitting of a model for each hypothesized label $\tilde{y}$, the function $p(\tilde{y}; \mathbf{Z}_{1:n}, X_{n+1})$ in Eq.~\ref{eq:def-pfunc} is cheaper to compute. Moreover, the prediction set in Eq.~\ref{eq:conf-pred-set} often admits a closed-form representation, eliminating the need to evaluate $p(\tilde{y}; \mathbf{Z}_{1:n}, X_{n+1})$ for every candidate $\tilde{y}$, as in Section~\ref{sec:illustrative-numerical}. These advantages explain the popularity of split conformal prediction.

\begin{figure}[!htb]
\centering
    \includegraphics[width=0.95\linewidth]{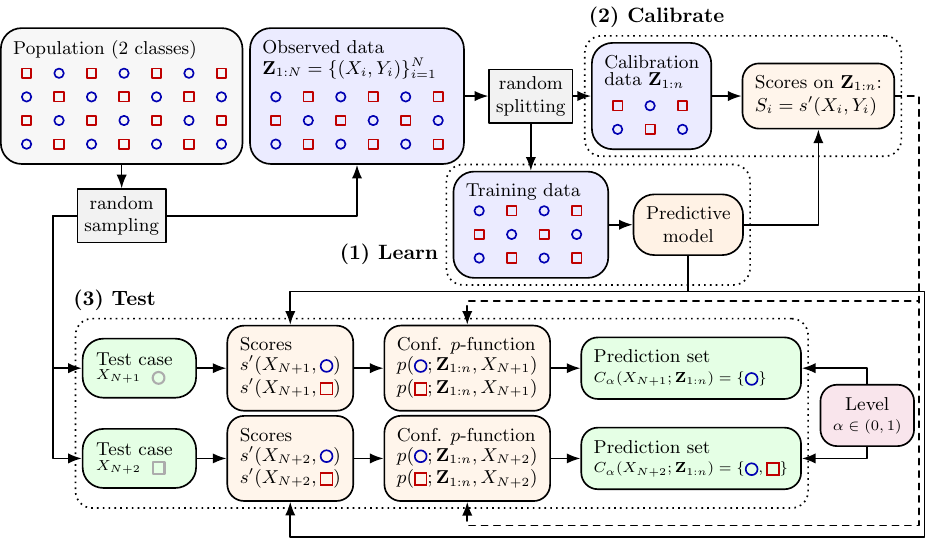}
\caption{
Schematic of split conformal prediction for binary classification.
The data are randomly split into training and calibration subsets.
A predictive model is trained on the training data.
For each test input, nonconformity scores are computed for both hypothesized labels (blue circle and red square).
These scores are compared to the calibration scores to evaluate the conformal $p$-function.
The prediction set comprises labels whose conformal $p$-function exceeds the nominal level $\alpha$.
}
\label{fig:1}
\end{figure}

\subsubsection{Computational and statistical trade-offs}

The computational advantages of split conformal prediction come at the cost of some statistical efficiency. Although marginal coverage is guaranteed for any sample size, the usefulness of the prediction sets depends on the ability of the score function to capture the relation between $X$ and $Y$. Training a model that produces good scores may therefore require substantial data, especially with high-dimensional features. If fewer data are used for training, the learned model may be less accurate, and the resulting prediction sets may be less informative and potentially have lower feature-conditional coverage (Eq.~\ref{eq:coverage-X-cond}) than those from full conformal prediction.

In practice, it is often preferable to allocate most data to model training, since learning the relationship between $X$ and $Y$ is typically the most difficult task. By contrast, marginal coverage converges at rate $\mathcal{O}(1/n)$, suggesting that relatively small calibration samples are typically sufficient; see Section~\ref{sec:calibration-conditional} for more details on how coverage depends on $n$.

\section{METHODOLOGY}

\subsection{Regression}

\subsubsection{Prediction intervals}
To construct two-sided prediction intervals for real-valued outcomes, conformal methods leverage nonconformity scores derived from a regression model trained to approximate the conditional behavior of $Y$ given $X$. The choice of model and score function largely determines the quality of the resulting intervals.

A classical choice is to use a conditional-mean regression model: $s((x,y);D(\tilde{y})) =  |y - \hat{m}(x;D(\tilde{y}))|$, where $\hat{m}(x;D(\tilde{y}))$ is any estimate of $\mathbb{E}[Y \mid X=x]$ \citep{vovk2022algorithmic}. Under the split conformal framework, where $\hat{m}$ is obtained from a distinct training dataset and thus $\hat{m}(x;D(\tilde{y})) = \hat{m}(x)$ does not depend on $D(\tilde{y})$, Eq.~\ref{eq:conf-pred-set} can be equivalently written as
\begin{align} \label{eq:cp-int-regr}
C_\alpha(X_{n+1}; \mathbf{Z}_{1:n})
= \hat{m}(X_{n+1}) \pm Q\left(\hat{P}(\mathbf{S}_{1:n}); (1-\alpha)\frac{n+1}{n}\right),
\end{align}
where $S_i = s'(X_i, Y_i) = |Y_i - \hat{m}(X_i)|$ is the $i$-th residual, for $i \in [n]$.
Although these intervals have nice properties in homoscedastic settings \citep{lei2018distribution}, they have constant width and thus generally lack feature-conditional coverage and adaptivity to heteroscedasticity. This limitation has motivated several alternative score functions.
\begin{marginnote}[]
\entry{Heteroscedasticity}{Feature-specific variability in the spread of the distribution of $Y\mid X$. It makes it more difficult to construct informative prediction sets.}
\entry{Homoscedasticity}{The lack of heteroscedasticity.}
\end{marginnote}

A widely used approach  replaces mean-regression with quantile-based nonconformity scores \citep{romano2019conformalized}. A pair of quantile regression models $\hat{q}_\ell(x; D(\tilde{y}))$ and $\hat{q}_u(x; D(\tilde{y}))$ is trained to approximate the lower and upper $(\alpha/2,\,1-\alpha/2)$ quantiles of $Y$ conditional on $X=x$. The corresponding score function takes the form $s((x,y);D(\tilde{y})) = \max\{\,\hat{q}_\ell(x; D(\tilde{y})) - y,\; y - \hat{q}_u(x; D(\tilde{y}))\,\}$; then, under the split conformal framework, the prediction interval simplifies to
\begin{align} \label{eq:cp-reg-cqr}
& C_\alpha(X_{n+1}; \mathbf{Z}_{1:n}) = \bigl[\hat{q}_\ell(X_{n+1}) - \hat{\tau},\; \hat{q}_u(X_{n+1}) + \hat{\tau}\bigr],
& \hat{\tau} = Q\left(\hat{P}(\mathbf{S}_{1:n}); (1-\alpha)\frac{n+1}{n}\right),
\end{align}
where $S_i = s'(X_i, Y_i)$ for $i \in [n]$. In this case, local adaptivity is provided by the quantile regression models, and marginal coverage by the conformal adjustment $\hat{\tau}$.
\citet{sesia2020comparison} show that, if the model-based conditional quantile estimates are consistent, these prediction intervals asymptotically approach the shortest among equal-tailed prediction intervals with feature-conditional coverage (Eq.~\ref{eq:coverage-X-cond}).

Alternative score functions can yield even more adaptive prediction sets by modeling conditional distributions beyond the mean or specific quantiles \citep{izbicki2019flexible,chernozhukov2021distributional,sesia2021conformal}.

\subsubsection{Empirical example: serum creatinine}

Figure~\ref{fig:2} presents an empirical comparison of conformal prediction intervals for serum creatinine using data from the National Health and Nutrition Examination Survey (NHANES) \citep{paulose2021national}. We focus on an apparently healthy reference population, excluding participants with self-reported kidney disease or pregnancy, and use age and sex as covariates; after removing missing values, the sample size is 6{,}090. The data are randomly split into training (4{,}263), calibration (913), and test (914) sets to implement split conformal prediction and assess performance.
Each observation in the test set separately plays the role of the $(n+1)$-th point, for which a prediction interval for $Y_{n+1}$ is constructed using the corresponding $X_{n+1}$, the model trained on the training data, and the $n=913$ calibration points. To estimate marginal coverage, the coverage indicator for $Y_{n+1}$ is averaged over the 914 test points. The average interval width is estimated similarly.

We compare the two regression-based conformal approaches described above, at level $\alpha = 0.05$. For the mean-based method with absolute residual nonconformity scores (Eq.~\ref{eq:cp-int-regr}), we fit a generalized additive model using the \texttt{gam} package \citep{R-gam} in \textsf{R} \citep{R}, with a smooth age effect and a sex main effect. For the quantile-based method (Eq.~\ref{eq:cp-reg-cqr}), we fit analogous quantile generalized additive models using \texttt{qgam} \citep{R-qgam} to estimate the lower and upper conditional quantiles. 
Both methods
achieve empirical test coverage close to 95\%, with similar average interval lengths, but only the quantile-based approach adapts to age-dependent heteroscedasticity.

\begin{figure}[!htb]
    \centering
    \includegraphics[width=0.8\linewidth]{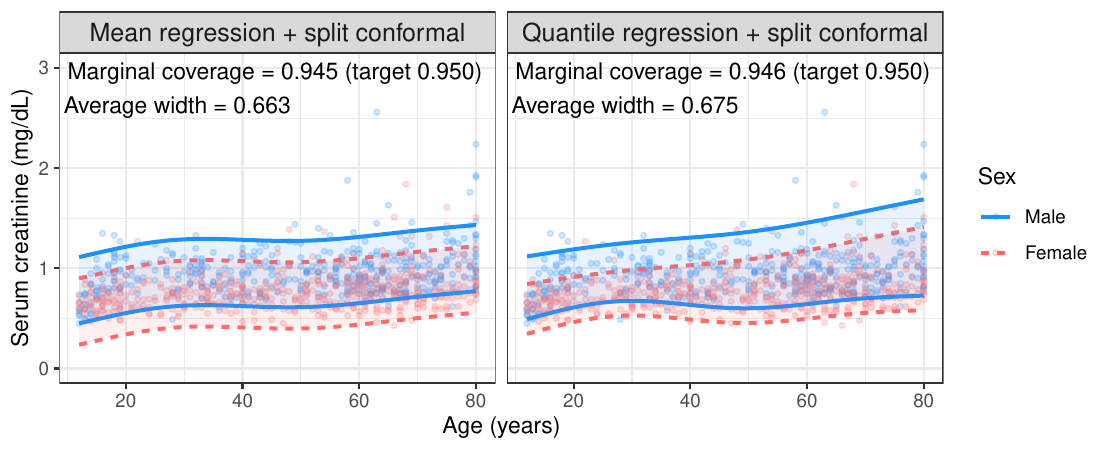}
\caption{
Conformal prediction intervals (\(\alpha = 0.05\)) for serum creatinine as a function of age and sex, using NHANES data restricted to a healthy reference population without self-reported kidney disease or pregnancy.
Dots denote observed outcomes for a hold-out test set, and curves indicate the lower and upper prediction bounds.
Left: intervals based on nonlinear mean regression; right: intervals based on quantile regression.
The quantile-based approach adapts to heteroscedasticity.
}
    \label{fig:2}
\end{figure}

\subsubsection{Beyond prediction intervals}
Several works develop conformal prediction sets that account for multimodal responses \citep[e.g.,][]{lei2013distribution,lei2014distribution,izbicki2022cdsplit}. Others construct prediction sets for multivariate responses \citep{messoudi2021copula,colombo2024normalizing,braun2025minimum,klein2025multivariate,fan2025interpretable}; see \citet{dheur2025unified} for a survey. Methods for the related problem of constructing prediction sets for functions of multiple test cases are developed in \cite{lee2024batch}.

\subsection{Classification}

\subsubsection{Probabilistic models and scores} \label{sec:classification-scores}

To construct prediction sets for $K$-class classification, conformal methods typically leverage a model that estimates conditional class probabilities. For any label $y \in [K]$, let $\hat{\pi}_{y}(x; D(\tilde{y}))$ denote the model's estimate of $\mathbb{P}(Y = y \mid X=x)$, which may be fitted using either the augmented reference dataset $D(\tilde{y})$ (full conformal) or a separate training set (split conformal).
In the former case, $\hat{\pi}_{y}(x; D(\tilde{y}))$ is the natural feature-dependent extension of $\hat{\pi}_{y}(D(\tilde{y}))$ from Section~\ref{sec:illustrative-categorical-estim}.
Most classifiers, including multinomial logistic regression, neural networks with a softmax output layer, boosted trees, and random forests, provide such estimates.

A classical choice of nonconformity score function is
$s((x,y);D(\tilde{y})) = 1-\hat{\pi}_{y}(x;D(\tilde{y}))$ \citep[e.g.,][]{vovk2022algorithmic}.
In the split conformal setting, where
$s((x,y);D(\tilde{y})) = 1 - \hat{\pi}_y(x)$
for a fixed probabilistic classifier, these scores lead to prediction sets of the intuitive form
\begin{align*}
& C_\alpha(X_{n+1}; \mathbf{Z}_{1:n}) = \left\{ \tilde{y} \in [K] : \hat{\pi}_{\tilde{y}}(X_{n+1}) \ge 1-\hat{\tau} \right\},
& \hat{\tau} = Q\!\left(\hat{P}(\mathbf{S}_{1:n});\, (1-\alpha)(1+1/n)\right),
\end{align*}
where $S_i = s'(X_i, Y_i)$ for $i \in [n]$. These prediction sets approximately minimize the expected number of labels included subject to marginal coverage \citep{sadinle2016least}.
One limitation is that they are guaranteed to be non-empty only if $\hat{\tau} \geq (K-1)/K$; otherwise, they are empty for test cases where the estimated label probabilities are all below $1-\hat\tau$. In practice, uninformative empty sets can be avoided by always including at least the label with the smallest nonconformity score.
Another limitation is that they cannot adapt if the conditional distribution of $Y$ given $X$ varies substantially in its concentration across the   feature space, possibly leading to poor feature-conditional coverage \citep{cauchois2021knowing}.

An alternative approach that aims to minimize prediction set size while seeking approximate feature-conditional coverage uses adaptive scores based on cumulative class probabilities \citep{romano2020classification}. This approach sorts the labels in decreasing order of $\hat{\pi}_y(x)$ and includes them in the prediction set until their cumulative probability exceeds a calibrated threshold, leading to smaller sets when the true conditional label distribution is more concentrated. 
Some extensions focus on preventing very large sets when the classifier provides inaccurate probability estimates \citep{angelopoulos2021uncertainty}, and  optimizing the probability of maximally informative singleton prediction sets \citep{wang2026singletonoptimized}.

\subsubsection{Empirical example: diabetes classification} \label{sec:example-diabetes}

Figure~\ref{fig:3} illustrates split conformal prediction for binary
classification using NHANES data, where the outcome is diabetes status
determined by self-report and standard laboratory criteria. After
preprocessing (\appsec~\ref{app:diabetes_details}), 2{,}125 patients over
30 years of age remain, of whom 431 (20.3\%) are labeled as diabetic. We
split these at random into training (1{,}062), calibration (319), and test
(744) sets.

On the training set we fit a logistic regression model to estimate the conditional
probability of diabetes given demographic variables (age, sex,
race/ethnicity, poverty-income ratio), anthropometric measures (waist
circumference, height), cardiometabolic markers (systolic blood pressure,
triglyceride-to-HDL ratio, ALT, uric acid, GGT), and behavioral variables
(sleep duration, self-reported physical activity). We then construct a conformal
prediction set for each test patient at level $\alpha = 0.05$,
using the classical nonconformity score function described in
Section~\ref{sec:classification-scores} and obtaining the calibrated threshold $\hat \tau \approx 0.813$.

Among the 744 test patients, 417 receive the singleton prediction set $\{\text{Healthy}\}$, with an error rate of 6.7\%. Four receive $\{\text{Diabetes}\}$, with no errors. The remaining 323 receive the two-label set $\{\text{Healthy},\text{Diabetes}\}$, reflecting uncertainty; coverage within this group is trivially 100\%.
Overall, the empirical coverage is approximately 96\%.

\begin{figure}[!htb]
\centering
\includegraphics[width=\linewidth]{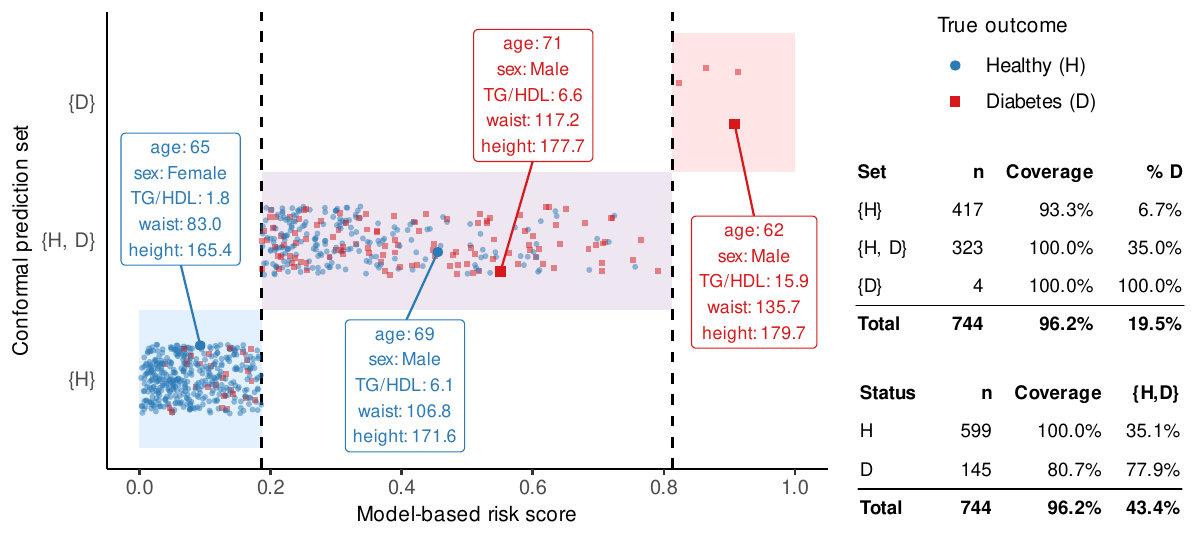}
\caption{
Split conformal prediction sets ($\alpha = 0.05$) for diabetes classification using NHANES data (patients aged over 30).
Test patients are plotted by their model-based predicted probability of diabetes (x-axis).
Dashed vertical lines indicate the three distinct prediction regions: \{Healthy\}, \{Healthy, Diabetes\}, and \{Diabetes\}.
Dots are colored and shaped by the true outcome, and shaded bands denote the proportion of true diabetes cases in each region.
The test coverage is empirically at the desired level. A few important features for four representative patients are highlighted.
}
\label{fig:3}
\end{figure}

\subsubsection{Beyond standard classification}

Conformal prediction extends beyond standard classification to more complex settings.
These include open-set classification, where test cases may belong to unseen classes and prediction sets must capture novelty \citep{xie2025openset}, as well as structured tasks including multi-label classification \citep{papadopoulos2014cross,wang2015comparison,lambrou2016binary,cauchois2021knowing}, where instances may have multiple labels, and hierarchical classification, where labels form a taxonomy and prediction sets must remain semantically coherent \citep{mortier2025conformal}.

\subsection{Outlier Detection} \label{sec:outlier-detection}

In the applications described above, testing the exchangeability of $\mathbf{Z}_{1:(n+1)}$ is a device for constructing a prediction set for the future outcome $Y_{n+1}$. By contrast, in outlier detection (or anomaly detection) applications, $Z_{n+1}$ is observed, and testing the null hypothesis $\mathcal{H}_{n+1}$ of exchangeability is the primary objective.

Consider for example fraud detection, where $\mathbf{Z}_{1:n}$ represent historical legitimate transactions sampled from a stable distribution and $Z_{n+1}$ is a new transaction that must be validated (if exchangeable with $\mathbf{Z}_{1:n}$) or flagged for further review (if deemed non-exchangeable). A natural statistical goal is to maximize the detection of fraudulent transactions (i.e., minimize type-II error) while controlling the rate of false positives (type-I error). Conformal $p$-values can directly address this problem.

\subsubsection{Testing exchangeability with conformal $p$-values}

Since $\mathbf{Z}_{1:(n+1)}$ are fully observed in this setting, the conformal $p$-function (Eq.~\ref{eq:def-pfunc}) can be evaluated at the true  value $Y_{n+1}$ instead of using a fixed hypothesized label, yielding the conformal $p$-value
\begin{align} \label{eq:def-pvalue-outlier}
p(Z_{n+1}; \mathbf{Z}_{1:n})
= \frac{1 + \sum_{i=1}^n \I{ s(Z_{n+1}; \Lbag \mathbf{Z}_{1:(n+1)} \Rbag) \le s(Z_i; \Lbag \mathbf{Z}_{1:(n+1)} \Rbag)} }{n+1}.
\end{align}
Here, $s(z; \Lbag \mathbf{Z}_{1:(n+1)} \Rbag)$ is a nonconformity score designed to quantify how atypical an observation $z$ is relative to the unordered reference set $\Lbag \mathbf{Z}_{1:(n+1)} \Rbag$. Therefore, the second argument of $s$ is invariant to permutations, and the conformal $p$-value is super-uniform (cf.\ Eq.~\ref{eq:pval-su}) under the null hypothesis $\mathcal{H}_{n+1}$ that $\mathbf{Z}_{1:(n+1)}$ are exchangeable \citep{vovk2022algorithmic}. Consequently, rejecting $\mathcal{H}_{n+1}$ whenever $p(Z_{n+1}; \mathbf{Z}_{1:n}) \le \alpha$ yields a valid level-$\alpha$ test.

Many nonconformity score functions for outlier detection are possible, including distances to nearest neighbors, likelihood or density estimates, reconstruction errors from autoencoders, and scores derived from one-class classifiers such as one-class SVMs or isolation forests. In each case, the underlying model may be trained either on an independent dataset (split conformal) or on the augmented reference dataset $\Lbag \mathbf{Z}_{1:(n+1)} \Rbag$ (full conformal).

\subsubsection{Multiple testing with conformal $p$-values}

In many applications, one observes $m$ test cases $Z_{n+1},\ldots,Z_{n+m}$ that must be simultaneously screened for outliers, leading to a multiple testing problem where each hypothesis $\mathcal{H}_{n+j}$ asserts that $Z_{n+j}$ is exchangeable with the reference sample $\mathbf{Z}_{1:n}$. One may want to test a global null (e.g., whether all new observations are exchangeable), identify likely outliers while controlling the false discovery rate (FDR), or perform more structured inference. These questions have attracted recent interest, partly due to the nontrivial dependence among conformal $p$-values $p(Z_{n+1}; \mathbf{Z}_{1:n}), \ldots, p(Z_{n+m}; \mathbf{Z}_{1:n})$ that share the same reference sample.

\citet{bates2023testing} show that conformal $p$-values can be combined with classical multiple testing procedures, including the Benjamini--Hochberg procedure \citep{benjamini1995controlling}, because they satisfy positive regression dependency on a subset (PRDS) \citep{benjamini2001control}. 
Subsequent work characterizes the joint distribution of conformal $p$-values \citep{gazin2024transductive} and extends conformal methods for multiple testing. These include learning more powerful score functions via positive-unlabeled learning \citep{marandon2024adaptive}, leveraging labeled outliers through adaptive weighting of conformal $p$-values to improve power \citep{liang2022integrative}, and developing procedures beyond FDR control for global testing and outlier enumeration \citep{magnani2023collective}.

\subsection{Other Supervised Learning Tasks}

Conformal prediction applies to many additional supervised learning problems. In matrix completion, it can produce uncertainty sets for missing entries, either individually \citep{gui2023conformalized} or jointly across related entries \citep{liang2024structured}. In trajectory forecasting, it constructs prediction bands with simultaneous coverage over time \citep{stankeviciute2021conformal,lindemann2023safe,lekeufack2024conformal,zhou2024conformalized}. More recently, it has been applied to image segmentation \citep{brunekreef2024kandinsky,mossina2025conformal} and natural language generation \citep{kumar2023conformal,ICLR2024_31421b11,pmlr-v235-mohri24a,cherian2024large,chan2025conformal}.

\section{EXTENSIONS}

\subsection{Beyond Exchangeable Data} \label{sec:weighted-cp}

Many conformal prediction methods rely on the full exchangeability of $\mathbf{Z}_{1:(n+1)}$; however, as anticipated in Section \ref{sec:foundations-pivots}, 
the idea is applicable more generally, whenever conditional pivots are available. 
\citet{tibshirani2019conformal} leverage this flexibility by assuming that, conditional on the bag $\Lbag \mathbf{Z}_{1:(n+1)} \Rbag$, we know 
how likely each observation is to occupy the role of the test case. 
With this conditional pivot, conformal predictions can be obtained by appropriately reweighting the observations in the conformal $p$-function (Eq.~\ref{eq:def-pfunc}).

Formally, let $f$ denote the joint law of $(Z_1,\ldots,Z_{n+1})$, possibly known only up to a proportionality constant, and interpreted as a probability mass function for discrete data or a density for continuous data. Let $\Pi_{n+1}$ denote the set of all permutations of $[n+1]$. For each $i \in [n+1]$ and  $z=(z_1,\ldots,z_{n+1}) \in \mathcal{Z}^{n+1}$, define the weight
\begin{align} \label{eq:def-weights}
  w_i^f(z)
  := \frac{\sum_{\sigma \in \Pi_{n+1}:\,\sigma(n+1)=i}
  f(z_{\sigma(1)},\ldots,z_{\sigma(n+1)})}
  {\sum_{\sigma \in \Pi_{n+1}}
  f(z_{\sigma(1)},\ldots,z_{\sigma(n+1)})}.
\end{align}
Intuitively, $w_i^f(z)$ is the probability that $Z_{n+1} = z_i$ conditional on $\Lbag \mathbf{Z}_{1:(n+1)} \Rbag = \Lbag z_{1:(n+1)} \Rbag$. 
As long as these weights are known---so that $(Z_1,\ldots,Z_{n+1})$ is a conditional pivot---the rank-based logic of conformal prediction extends to non-exchangeable settings.

To shorten the notation, for each candidate label $\tilde{y} \in \mathcal{Y}$ let
$\widetilde{\mathbf{Z}}^{\tilde{y}} = (\widetilde Z_1^{\tilde{y}},\ldots,\widetilde Z_{n+1}^{\tilde{y}})$ denote the augmented sample
defined by $\widetilde Z_i^{\tilde{y}} = Z_i$ for $i \in [n]$ and $\widetilde Z_{n+1}^{\tilde{y}} = (X_{n+1},\tilde{y})$, so that
$D(\tilde{y}) = \{ \widetilde Z_1^{\tilde{y}},\ldots,\widetilde Z_{n+1}^{\tilde{y}} \}$.
Then the pivotal approach from Section \ref{sec:foundations-pivots} reduces to defining a weighted $p$-function as
\begin{align} \label{eq:def-pvalue-w}
p^f(\tilde{y}; \mathbf{Z}_{1:n}, X_{n+1})
:= \sum_{i=1}^{n+1} w_i^f(\widetilde Z^{\tilde{y}})\,
\I{ s\bigl(\widetilde Z_{n+1}^{\tilde{y}};D(\tilde{y})\bigr)
   \le s\bigl(\widetilde Z_i^{\tilde{y}};D(\tilde{y})\bigr) }.
\end{align}
If $f$ correctly specifies the joint law of $(Z_1,\ldots,Z_{n+1})$ up to normalization, evaluating this function at the true label yields a valid conformal $p$-value.

\begin{theorem}\label{thm:weighted-super-unif}
Assume that $(Z_1,\ldots,Z_{n+1})$ has joint law $f$ on $\mathcal{Z}^{n+1}$ (known up to a constant). Define the weighted conformal $p$-function $p^f(\tilde{y};\mathbf{Z}_{1:n},X_{n+1})$ as in Eq.~\ref{eq:def-pvalue-w}.
Then,
\begin{align} \label{eq:pval-w-su}
  \P{ p^f(Y_{n+1}; \mathbf{Z}_{1:n}, X_{n+1}) \le \alpha } \le \alpha, \qquad \forall \alpha \in (0,1).
\end{align}
\end{theorem}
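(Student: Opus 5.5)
The plan is to condition on the unordered bag $E = \Lbag \mathbf{Z}_{1:(n+1)} \Rbag$ and work with the conditional law of which element of the bag occupies the test position. Evaluated at the true label, the augmented sample $\widetilde{\mathbf{Z}}^{Y_{n+1}}$ coincides with $\mathbf{Z}_{1:(n+1)}$, and $D(Y_{n+1}) = E$. Write $V_i := s(Z_i; E)$ for $i \in [n+1]$.

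The first step is to check that $w_i^f(\mathbf{Z}_{1:(n+1)})$ is a symmetric function of the bag in the following sense: permuting the coordinates of $z$ permutes the weights correspondingly. The total mass $\sum_\sigma f(z_\sigma)$ is permutation invariant. Hence the weight attached to a given value $z_i$ depends only on $E$ and on which value is considered, not on its original index.

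The key lemma, and the main technical step, is this: conditional on $E = \Lbag z_{1:(n+1)}\Rbag$, we have $\P{Z_{n+1} = z_i \mid E} = w_i^f(z)$. This is exactly the interpretation stated after Eq.~\ref{eq:def-weights}. I would prove it by a change-of-variables argument. For any bounded measurable $g$ and any bounded symmetric $h$, write $\mathbb{E}[g(Z_{n+1}) h(\mathbf{Z}_{1:(n+1)})] = \int g(z_{n+1}) h(z) f(z)\,dz$. Then average this integral over all $(n+1)!$ relabelings $z \mapsto z_\sigma$, each of which preserves the dominating product measure and leaves $h$ unchanged. This gives
\[
\int h(z) \sum_{i=1}^{n+1} g(z_i)\, \frac{\sum_{\sigma:\sigma(n+1)=i} f(z_\sigma)}{(n+1)!}\, dz .
\]
Multiplying and dividing by $\sum_\sigma f(z_\sigma)/(n+1)!$, which integrates against $h$ to $\mathbb{E}[h]$, identifies $\mathbb{E}[g(Z_{n+1}) \mid E] = \sum_i w_i^f(\mathbf{Z}) g(Z_i)$. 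The main obstacle is measure-theoretic care: ties among the $z_i$, and points where $\sum_\sigma f(z_\sigma) = 0$. I handle ties by working with indices rather than values, i.e., treating the identity of the position as the random quantity. The zero-denominator set has probability zero.

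Given the lemma, the $p$-value at the true label is $p = \sum_i w_i \I{V_{n+1} \le V_i}$. Conditional on $E$, the scores $V_1, \ldots, V_{n+1}$ as a multiset and the weights are fixed. The random quantity is the index $I$ (the element occupying the test slot), which has conditional law $\P{I = i \mid E} = w_i$. Define $F(v) := \sum_j w_j \I{v \le V_j}$, a nonincreasing right-tail function. Then $p = F(V_I)$.

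It remains to show $\P{F(V_I) \le \alpha \mid E} \le \alpha$. This is the standard fact that the weighted upper-tail function evaluated at a draw from its own weighted distribution is super-uniform. Let $v^* = \inf\{v : F(v) \le \alpha\}$ over the finitely many atoms. The event $\{F(V_I) \le \alpha\}$ equals $\{V_I \ge v_0\}$, where $v_0$ is the smallest atom with $F(v_0) \le \alpha$. Its conditional probability is $\sum_j w_j \I{V_j \ge v_0} = F(v_0) \le \alpha$. Taking expectations over $E$ yields Eq.~\ref{eq:pval-w-su}. In the exchangeable case $w_i = 1/(n+1)$, and this recovers Theorem~\ref{thm:exch-super-unif}.
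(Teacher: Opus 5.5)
Your proposal is correct and takes the same route as the paper. You condition on the bag, use that the test slot is occupied by $Z_i$ with conditional probability $w_i^f$, and conclude from the super-uniformity of the weighted upper-tail function $F(v)=\sum_j w_j^f\,\I{v \le V_j}$ evaluated at a draw from its own weighted law; this is exactly the paper's $\mathbb{P}(V \le V' \mid D, V)$ with $V,V'$ conditionally i.i.d. The only difference is that you supply the two steps the paper merely asserts, namely the conditional-law lemma (by symmetrizing the density) and the super-uniformity bound (via the smallest atom $v_0$).
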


\begin{proof}
Let $\mathbf{Z}=(Z_1,\ldots,Z_{n+1})$ and $D := D(Y_{n+1}) =\Lbag \mathbf{Z}_{1:(n+1)} \Rbag$. Under $f$, the conditional probability that $Z_i$ occupies the last position given $D$ is precisely $w_i^f(\mathbf{Z})$. Hence,
\[
p^f(Y_{n+1};\mathbf{Z}_{1:n},X_{n+1}) \overset{d}{=} \P{ V \leq V' \mid D, V },
\]
with $V = s(Z_I;D)$ and $V' = s(Z_J;D)$, where $I$ is a random index drawn according to $\mathbb{P}(I=i \mid D)=w_i^f(\mathbf{Z})$ and $J$ is an independent draw from the same conditional distribution.
Since $V,V'$ are i.i.d.~conditional on $D$, $\mathbb{P}(V\le V'\mid D, V)$ is super-uniform, from which the result follows.
\end{proof}

From Theorem~\ref{thm:weighted-super-unif}, conformal prediction sets are constructed as in the exchangeable setting. Specifically, the $\alpha$-level conformal prediction set for $Y_{n+1}$ comprises all candidate labels $\tilde{y} \in \mathcal{Y}$ for which $p^f(\tilde{y}; \mathbf{Z}_{1:n}, X_{n+1})$ is larger than $\alpha$, analogously to Eq.~\ref{eq:conf-pred-set},
\begin{align} \label{eq:conf-pred-set-weighted}
C^f_\alpha(X_{n+1}; \mathbf{Z}_{1:n})
:= \left\{ \tilde{y} \in \mathcal{Y} : p^f(\tilde{y}; \mathbf{Z}_{1:n}, X_{n+1}) > \alpha \right\}.
\end{align}
This can be interpreted as the acceptance region for a test of the null hypothesis that the joint distribution of $\mathbf{Z}_{1:(n+1)}$ is correctly specified by the function $f$, up to a constant.

\paragraph*{Special case: exchangeable data.}
For any $f$ that is invariant under permutations---that is, under which $Z_1,\ldots,Z_{n+1}$ are exchangeable---$w_i^f(z)=1/(n+1)$ for all $i$, and substituting these weights into Eq.~\ref{eq:def-pvalue-w} recovers the conformal $p$-function from Eq.~\ref{eq:def-pfunc}. Classical conformal prediction therefore re-appears as a special case of the weighted framework.

\paragraph*{Special case: covariate shift.}
An important departure from exchangeability is covariate shift, where the distribution of $Y$ conditional on $X$ is the same for all $n+1$ observations, but the marginal feature distribution differs between $\mathbf{X}_{1:n}$ and $X_{n+1}$. 
This setting arises when reference and test populations differ but the predictive relationship remains stable. For example, a model trained to predict diabetes from demographic and clinical features in one population may be deployed in another with a different age or lifestyle distribution. Although the feature distribution shifts, the conditional relationship with diabetes, reflecting underlying biological mechanisms, may remain unchanged.
Let $P_X$ denote the marginal distribution of $X$ for the first $n$ observations and $Q_X$ that of $X_{n+1}$.
Conformal prediction can be applied under covariate shift as long as the shift is not so extreme as to require extrapolation, as we must assume that $Q_X$ places no mass outside the support of $P_X$.

Under covariate shift, the joint data distribution factorizes as $f(Z_1,\ldots,Z_{n+1}) \propto \prod_{i=1}^{n+1} p(Y_i \mid X_i)\, p_i(X_i)$, where $p(Y\mid X)$ is common to all observations, $p_i(X)=P_X(X)$ for $i\le n$, and $p_{n+1}(X)=Q_X(X)$. Substituting this factorization into the definition of the weights in Eq.~\ref{eq:def-weights} yields
\[
w_i^f(\mathbf{Z}_{1:n},(X_{n+1},\tilde{y}))
=
\dfrac{\mathrm{d}Q_X}{\mathrm{d}P_X}(X_i)
/\left(\sum_{j=1}^{n+1} \dfrac{\mathrm{d}Q_X}{\mathrm{d}P_X}(X_j)\right),
\qquad i\in[n+1],
\]
where $\mathrm{d}Q_X/\mathrm{d}P_X$ is the density ratio between $Q_X$ and $P_X$.

Consequently, the conformal $p$-function in Eq.~\ref{eq:def-pvalue-w} calculates a reweighted rank statistic, where observations contribute according to how representative their features are of the test distribution. In practice, $\mathrm{d}Q_X/\mathrm{d}P_X$ may be unknown but can be estimated when enough samples from $Q_X$ are available, for example by fitting a binary classifier to distinguish training from test covariates \citep{tibshirani2019conformal}.
Coverage guarantees with estimated weights are given in \cite{yang2024doubly}. A method that avoids direct weight estimation, and is better suited to high dimensional settings, is proposed in \cite{joshi2025conformal}.

\paragraph*{Other special cases.}

Although Eq.~\ref{eq:def-weights} involves an apparently daunting sum over an exponential number of permutations, there are many other non-exchangeable settings where weighted conformal prediction can be applied. A prominent example is label shift \citep{podkopaev2021distribution}, where the class-conditional distributions of $X \mid Y$ remain unchanged but the marginal label distribution differs across $\mathbf{Z}_{1:n}$ and $Z_{n+1}$.

Weighted conformal prediction also applies to more structured sampling schemes. For instance, \citet{liang2024structured} consider sampling without replacement from a finite population, where $\mathbf{Z}_{1:n}$ are exchangeable among themselves but not with $Z_{n+1}$. \citet{xie2025openset} consider stratified sample splitting to address class imbalance in classification.

\subsection{Beyond Marginal Coverage}

Marginal coverage (Eq.~\ref{eq:coverage}) averages over variation in both the observed data $\mathbf{Z}_{1:n}$ and the test case $Z_{n+1}=(X_{n+1}, Y_{n+1})$. Although this is appealing for its simplicity, stronger guarantees can be obtained by separating the two sources of randomness, see e.g., \cite{wilks1941determination,vovk2012conditional}. We first consider conditioning on aspects of the test case, relaxing Eq.~\ref{eq:coverage-X-cond} from Section~\ref{sec:foundations}. Conditioning on the calibration data is discussed in Section~\ref{sec:calibration-conditional}.

\subsubsection{Conditioning on the test case}

If the features $X$ are informative about the outcome $Y$, it is natural to ask whether uncertainty guarantees should hold not only on average over test cases, but also conditionally on relevant subsets of the sample space. A general way to formalize this idea is through a class of conditioning events.

Let $\mathcal{G}$ be a collection of measurable functions $g:\mathcal{X}\times\mathcal{Y}\to\{0,1\}$, and consider coverage guarantees of the form
\begin{align} \label{eq:functional-conditional-coverage}
\P{Y_{n+1}\in C_\alpha(X_{n+1};\mathbf{Z}_{1:n}) \mid g(X_{n+1},Y_{n+1})=1 }\ge 1-\alpha,
\qquad \forall g\in\mathcal{G},
\end{align}
whenever the conditioning event has positive probability. This formulation, adopted for example by \citet{gibbs2025conformal}, unifies many useful notions of conditional coverage.

If $\mathcal{G}$ consists of indicator functions of the form $g_{y'}(x,y)=\mathbb{I}\{y=y'\}$, one obtains label-conditional coverage, particularly relevant in classification \citep{vovk2012conditional}.
If instead $\mathcal{G}$ contains indicators of categorical feature values (e.g., a patient's sex when predicting serum creatinine), this yields group-conditional coverage, sometimes motivated by algorithmic fairness considerations \citep{romano2020malice}.
As an extreme case, taking $\mathcal{G}$ to include all singleton feature indicators leads to feature-conditional coverage, as in Eq.~\ref{eq:coverage-X-cond}.
Intermediate choices of $\mathcal{G}$ correspond to weaker conditional coverage notions based on multiple, possibly overlapping neighborhoods or strata \citep{gibbs2025conformal}.

\paragraph{Impossibility results}
As noted in Section~\ref{sec:uq-pred-sets}, exact feature-conditional coverage (Eq.~\ref{eq:coverage-X-cond}) is unattainable without stronger distributional or regularity assumptions. \citet{foygel2021limits} show that any method achieving Eq.~\ref{eq:coverage-X-cond} in finite samples for arbitrary distributions must produce prediction sets so large as to be uninformative.

This difficulty extends to guarantees of the form in Eq.~\ref{eq:functional-conditional-coverage} when the class $\mathcal{G}$ of conditioning events is sufficiently rich. Conformal prediction compares a test case to relevant past observations, and there may be too few of those if conditioning events have low probability. For instance, conditioning simultaneously on sex, age, height, weight, clinical history, and lifestyle may make a patient effectively unique in the NHANES dataset.

In such cases, obtaining informative conformal predictions requires aggregating evidence across similar but non-identical cases, by leaning on (parametric) model assumptions or relaxing the target guarantees. Conformal prediction emphasizes the latter strategy, though model-based approaches remain essential for learning informative nonconformity scores.

\paragraph{Conformal approaches}
Several methods aim to (approximately) achieve conditional guarantees by modifying the definition of conformal $p$-functions. These approaches intervene in the ranking step of Eq.~\ref{eq:def-pfunc}, for example by restricting or reweighting observations.

A simple way to achieve Eq.~\ref{eq:functional-conditional-coverage} applies when $\mathcal{G}$ partitions the sample space into disjoint strata: the $p$-function $p(\tilde{y}; \mathbf{Z}_{1:n}, X_{n+1})$ in Eq.~\ref{eq:def-pfunc} is then computed by restricting the ranking to cases in the same stratum as the test case under the hypothesized label $\tilde{y}$. This is known as Mondrian conformal prediction \citep{vovk2022algorithmic,vovk2012conditional}.

\citet{gibbs2025conformal} extend this idea to overlapping conditioning events, constructing prediction sets that guarantee Eq.~\ref{eq:functional-conditional-coverage} across partially overlapping subpopulations such as ``male'', ``female'', ``under 50'', and ``over 50''.

Alternatively, localized conformal prediction \citep{guan2023localized} modifies the ranking in Eq.~\ref{eq:def-pfunc} by prioritizing observations more similar to the test case. Unlike weighted conformal prediction for non-exchangeable data (Section~\ref{sec:weighted-cp}), the goal here is not to restore marginal validity under distribution shift—exchangeability is still assumed—but to improve conditional adaptivity while retaining finite-sample marginal coverage.

\paragraph{Learning approaches}
A complementary strategy is to simply focus on designing and fitting sufficiently expressive nonconformity score functions, which can lead to highly adaptive and practically informative conformal prediction sets even without finite-sample conditional guarantees.
In regression, \citet{sesia2020comparison} show that quantile-based scores produce conformal prediction intervals (e.g., Eq.~\ref{eq:cp-reg-cqr}) that asymptotically achieve optimal feature-conditional performance if the conditional quantiles are estimated consistently. In classification, \citet{romano2020classification} propose adaptive scores based on cumulative class probabilities, which enjoy similar oracle-consistency conditional properties.

These results suggest that limited data may often be more effectively invested in improving the model underlying the nonconformity score function rather than complicating the calibration step. Conformal prediction aims to separate learning from predictive inference: the first stage fits the score function, while the second converts the scores into prediction sets with formal coverage guarantees. Achieving feature-conditional coverage essentially requires an understanding of the relationship between $X$ and $Y$, and that burden belongs more naturally to the training stage. Sometimes one needs to guarantee coverage conditional on specific features, which, if categorical, may not be too difficult.
However, stronger conditional guarantees require additional calibration data, at the expense of training.

Learning algorithms have been proposed to train score functions using conformal objectives \citep{colombo2020training,bellotti2021optimized,stutz2022learning}. Some methods explicitly aim to improve feature-conditional coverage within
a standard split-conformal prediction framework with marginal guarantees \citep{einbinder2022training,xie2024boosted}.

\subsubsection{Conditioning on the data} \label{sec:calibration-conditional}

The two distinct sources of randomness in the marginal guarantee in Eq.~\ref{eq:coverage} can be explicitly separated using the tower property:
\begin{align}
\P{Y_{n+1}\in C_\alpha(X_{n+1};\mathbf{Z}_{1:n})}
=
\mathbb{E}\!\left[\P{Y_{n+1}\in C_\alpha(X_{n+1};\mathbf{Z}_{1:n}) \mid \mathbf{Z}_{1:n}}\right]. \label{eq:marginal-as-expectation}
\end{align}
This motivates defining the calibration-conditional coverage
\begin{align}
\mathrm{coverage}(\mathbf{Z}_{1:n})
:=
\P{Y_{n+1}\in C_\alpha(X_{n+1};\mathbf{Z}_{1:n}) \mid \mathbf{Z}_{1:n}}, \label{eq:covZ-def}
\end{align}
a random variable.
Marginal validity, Eq.~\ref{eq:coverage}, is equivalent to $\mathbb{E}[\mathrm{coverage}(\mathbf{Z}_{1:n})]\ge 1-\alpha$, which does not say how often unlucky datasets may have $\mathrm{coverage}(\mathbf{Z}_{1:n})$ smaller than $1-\alpha$.

Alternatively, one could demand that $\mathrm{coverage}(\mathbf{Z}_{1:n})$ itself be large with high probability over $\mathbf{Z}_{1:n}$. This matches the traditional notion of a tolerance region \citep{wilks1941determination,scheffe1945non,tukey1947non}, and in modern learning-theoretic terminology  corresponds to a probably-approximately-correct (PAC) coverage guarantee \citep{vovk2012conditional,Park2020PAC}.

\paragraph{PAC coverage}
Fix a confidence parameter $\delta\in(0,1)$. A prediction set $C_{\alpha,\delta}(X_{n+1};\mathbf{Z}_{1:n})$ satisfies PAC coverage at level $(\alpha,\delta)$ if
\begin{align}
\mathbb{P}\left[
\P{Y_{n+1}\in C_{\alpha,\delta}(X_{n+1};\mathbf{Z}_{1:n}) \mid \mathbf{Z}_{1:n}} \ge 1-\alpha
\right]
\ge 1-\delta, \label{eq:pac-coverage}
\end{align}
where the outer probability is taken over $\mathbf{Z}_{1:n}$. In words, with probability at least $1-\delta$ over the observed data $\mathbf{Z}_{1:n}$, the resulting prediction set achieves coverage of at least $1-\alpha$ for the distribution of future test cases.

\paragraph{Split conformal prediction and tolerance regions}
PAC coverage is easiest to understand in the split conformal setting (cf.~Section~\ref{sec:split-conformal}), where the nonconformity score function can be written as $s':\mathcal{X}\times\mathcal{Y}\to\mathbb{R}$.
Given calibration data $\mathbf{Z}_{1:n}=(X_i,Y_i)_{i=1}^n$, define the nonconformity scores $S_i=s'(X_i,Y_i)$ and their order statistics $S_{(1)}\le\cdots\le S_{(n)}$. For any $r\in\{0,1,\ldots,n-1\}$, define
\begin{align}
T_r(X_{n+1}; \mathbf{Z}_{1:n})
&:=\left\{\tilde{y} \in\mathcal{Y}:\#\{i\in[n]: S_i \ge s'(X_{n+1},\tilde{y})\}\ge r+1\right\} \label{eq:Tr-def}\\
&=\left\{\tilde{y} \in\mathcal{Y}: s'(X_{n+1},\tilde{y})\le S_{(n-r)}\right\}. \notag
\end{align}
For a suitable value of $r$, this recovers the conformal prediction set $C_\alpha(X_{n+1}; \mathbf{Z}_{1:n})$ in Eq.~\ref{eq:conf-pred-set}.

If the data are i.i.d.~(and not generally under exchangeability), the calibration-conditional coverage of $T_r(X_{n+1}; \mathbf{Z}_{1:n})$ has an exact beta distribution \citep{wilks1941determination}.

\begin{theorem} \label{thm:beta-conditional-coverage}
Assume that $Z_1,\ldots,Z_{n+1}$ are i.i.d. 
Let $s':\mathcal{X}\times\mathcal{Y}\to\mathbb{R}$ be fixed, and assume the scalar random variable $s'(X,Y)$ has a continuous distribution. Fix $r\in\{0,1,\ldots,n-1\}$ and let $T_r$ be defined by Eq.~\ref{eq:Tr-def}. Then the calibration-conditional miscoverage probability
\begin{align}
\theta_r(\mathbf{Z}_{1:n})
:=
\mathbb{P}\left\{Y_{n+1}\notin T_r(X_{n+1}; \mathbf{Z}_{1:n}) \mid \mathbf{Z}_{1:n}\right\} \label{eq:theta-def}
\end{align}
has distribution $\mathrm{Beta}(r+1,n-r)$.
\end{theorem}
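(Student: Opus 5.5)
The plan is the classical probability integral transform argument. Let $F$ be the CDF of $S=s'(X,Y)$ under the common law of the $Z_i$; by assumption $F$ is continuous. Since $s'$ is fixed and the data are i.i.d., the calibration scores $S_1,\ldots,S_n$ and the test score $S_{n+1}=s'(X_{n+1},Y_{n+1})$ are i.i.d.\ with CDF $F$. Moreover, $S_{n+1}$ is independent of $\mathbf{Z}_{1:n}$. By the second characterization in Eq.~\ref{eq:Tr-def}, $Y_{n+1}\notin T_r(X_{n+1};\mathbf{Z}_{1:n})$ if and only if $S_{n+1}>S_{(n-r)}$. Conditioning on $\mathbf{Z}_{1:n}$ fixes $S_{(n-r)}$, so
\[
\theta_r(\mathbf{Z}_{1:n}) = \mathbb{P}\left[S_{n+1}>S_{(n-r)}\mid \mathbf{Z}_{1:n}\right] = 1-F\bigl(S_{(n-r)}\bigr).
\]
This step uses only independence of the test point from the calibration data. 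This is exactly where the i.i.d.\ assumption is needed rather than mere exchangeability: under exchangeability, the conditional law of $S_{n+1}$ given $\mathbf{Z}_{1:n}$ need not equal $F$.

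Next, set $U_i=F(S_i)$ for $i\in[n]$. Because $F$ is continuous, each $U_i$ is $\mathrm{Uniform}(0,1)$, and the $U_i$ are i.i.d. Since $F$ is nondecreasing, it maps order statistics to order statistics, so $F(S_{(k)})=U_{(k)}$. Here I must check the case where $F$ is flat on some interval. A flat stretch carries probability zero, so almost surely no $S_i$ lands in one, and the ordering is preserved a.s. Continuity also guarantees a.s.\ no ties, so the order statistics are well defined. Hence $\theta_r = 1-U_{(n-r)}$ a.s.

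Finally, I will invoke the standard fact that the $k$-th order statistic of $n$ i.i.d.\ uniforms is $\mathrm{Beta}(k,n-k+1)$. If a derivation is wanted: $\mathbb{P}[U_{(k)}\le u]$ equals the probability that at least $k$ of $n$ Bernoulli($u$) indicators fire. Differentiating this binomial tail gives the density $\frac{n!}{(k-1)!(n-k)!}u^{k-1}(1-u)^{n-k}$. With $k=n-r$, $U_{(n-r)}\sim\mathrm{Beta}(n-r,r+1)$. The reflection $1-\mathrm{Beta}(a,b)\sim\mathrm{Beta}(b,a)$ then yields $\theta_r\sim\mathrm{Beta}(r+1,n-r)$.

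The range $r\in\{0,\ldots,n-1\}$ ensures $1\le n-r\le n$, so $S_{(n-r)}$ exists and both Beta parameters are positive. The only delicate step is the almost-sure identity $F(S_{(n-r)})=U_{(n-r)}$ together with $\mathbb{P}[S_{n+1}>s]=1-F(s)$; both are handled by the continuity of $F$. Everything else is routine.
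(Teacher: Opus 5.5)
Your proposal is correct and follows the paper's proof essentially step for step: you reduce $\theta_r$ to $1-F(S_{(n-r)})$ using independence of the test score from the calibration data, apply the probability integral transform to get $F(S_{(n-r)})=U_{(n-r)}\sim\mathrm{Beta}(n-r,r+1)$, and reflect to obtain $\mathrm{Beta}(r+1,n-r)$. Your extra care about flat stretches of $F$ is harmless but not needed. Applying a nondecreasing $F$ to the sorted list $S_{(1)}\le\cdots\le S_{(n)}$ keeps it sorted, so $F(S_{(k)})=U_{(k)}$ holds deterministically, not just almost surely.
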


\begin{proof}
Let $F$ denote the CDF of $S:=s'(X,Y)$ under the population distribution. Conditional on $\mathbf{Z}_{1:n}$, $S_{(n-r)}$ is fixed while $s'(X_{n+1},Y_{n+1})$ is an independent draw from the same distribution, hence
$\theta_r(\mathbf{Z}_{1:n})=\mathbb{P}\{S_{n+1}>S_{(n-r)}\mid \mathbf{Z}_{1:n}\}=1-F(S_{(n-r)})$,
where $S_{n+1}:=s'(X_{n+1},Y_{n+1})$.
Now set $W_i:=F(S_i)$. Under the continuity assumption, the probability integral transform implies that $W_1,\ldots,W_n$ are i.i.d.\ $\mathrm{Unif}(0,1)$, and $F(S_{(n-r)})=W_{(n-r)}$ almost surely. Therefore $\theta_r(\mathbf{Z}_{1:n})$ has the same distribution as $1-W_{(n-r)}$.
Now, $W_{(n-r)}\sim \mathrm{Beta}(n-r,r+1)$; see e.g., \cite{arnold2008first}. The result follows.
\end{proof}

Since calibration-conditional coverage has a known distribution that does not depend on the underlying data distribution, following \cite{wilks1941determination} we can choose $r$ to achieve the desired guarantee, including PAC coverage (Eq.~\ref{eq:pac-coverage}) and marginal coverage (Eq.~\ref{eq:coverage}). Clopper--Pearson intervals provide an alternative perspective through the connection between the beta and binomial distributions \citep[e.g.,][]{arnold2008first,Park2020PAC}.

A useful corollary of Theorem~\ref{thm:beta-conditional-coverage} is that $\mathrm{Var}(\theta_r(\mathbf{Z}_{1:n})) \approx \alpha(1-\alpha)/n$ if $r$ is chosen so that $\mathbb{E}[\theta_r(\mathbf{Z}_{1:n})]\approx\alpha$.
For example, when $\alpha=5\%$, this implies fluctuations of roughly $1.5\%$ down to $1\%$ for calibration sizes in the range $n=200$--$500$, motivating the rule of thumb that a few hundred calibration cases are often sufficient in practice.

\paragraph{Beyond i.i.d.: covariate and label shift}

The beta identity assumes that calibration and test scores are i.i.d. Under distribution shift, the inner probability in Eq.~\ref{eq:pac-coverage} must be taken with respect to the test distribution.

In this setting, PAC-style guarantees remain possible in some cases. For covariate shift with weights admitting suitable region-wise confidence intervals, one can apply worst-case rejection sampling to obtain an i.i.d.\ sample from the target distribution \citep{park2021pac}, and then use the standard PAC calibration approach. Asymptotic PAC coverage can also be achieved in a manner that is doubly robust to errors in weight and miscoverage estimation \citep{qiu2023prediction}. Interestingly, predicting missing outcomes under a missing-at-random assumption can be reduced to conformal prediction under covariate shift \citep{lee2025conditional}.

Under label shift, one can construct confidence intervals for label weights, yielding PAC coverage \citep{si2024pac}. This involves confidence intervals for class probabilities and confusion matrix entries,  propagated through matrix inversion \citep{si2024pac}.

\subsection{Conformal Prediction with Weakly Supervised Data}

A key assumption so far has been that the outcome of interest is fully observed in the available data. In many applications, however, outcomes may be only partially or imperfectly observed. In such settings, conformal prediction typically requires additional modeling assumptions and may provide only weaker guarantees. Nonetheless, several extensions of conformal prediction have been developed for these weakly supervised settings.

\paragraph*{Censored time-to-event data.}
In survival analysis, the outcome is an event time often partially censored. \citet{candes2023conformalized} adapt conformal prediction to this setting by focusing on one-sided intervals and recasting the problem as conformal prediction under covariate shift (Section~\ref{sec:weighted-cp}). Subsequent developments reduce conservativeness \citep{gui2024conformalized}, handle more general censoring mechanisms \citep{sesia2025doubly}, and construct two-sided conformal predictions \citep{farina2025doubly,pmlr-v266-sesia25a}. 
These methods rely on standard assumptions such as uninformative censoring and estimated inverse-probability of censoring weights, and therefore provide weaker guarantees than exact finite-sample coverage; for instance asymptotic and doubly robust coverage \citep{yang2024doubly}.

\paragraph*{Individual treatment effects.}
A related challenge arises in causal inference, where prediction of individual treatment effects depends on two counterfactual outcomes that are never jointly observed. This induces a structured distribution shift between observed and target quantities, which can be addressed through appropriate reweighting \citep{lei2021conformal,lee2025conditional} and sensitivity analysis \citep{jin2023sensitivity,yin2024conformal}.

\paragraph*{Proxy or noisy labels.}
Collecting accurate outcomes is sometimes feasible in principle but expensive in practice, prompting the use of surrogate outcomes. Conformal prediction has been extended to such settings \citep{Stutz2023,JMLR:v25:23-0253}. One research line considers data with noisy (occasionally incorrect) labels. \citet{einbinder2024label} show standard conformal methods are often conservative under non-adversarial label noise. 
Several adaptive methods have been developed to leverage models of random label contamination for classification \citep{sesia2024adaptive,clarkson2024split,bortolotti2025noise,penso2025conformal}, regression \citep{cohen2025efficient}, and outlier detection \citep{bashari2025robust}.

\subsection{Further Extensions and Related Methods}
\label{other}

Further extensions of conformal prediction have been proposed. Without aiming for completeness, we briefly highlight several notable directions.

\paragraph*{Batch and selective conformal prediction.}
In many applications, predictions must be made simultaneously for multiple test cases. One challenge is achieving joint validity for the entire batch \citep{lee2024batch,gazin2024powerful}. A second is selective inference, where guarantees are required for data-driven subsets selected from the batch \citep{jin2023selection,bao2024selective,gazin2025selecting,jin2025confidence}. These methods build on classical multiple testing ideas such as the false discovery rate \citep{benjamini1995controlling} and the false coverage rate \citep{benjamini2005false}. Applications include screening eligible patients with long predicted survival for oncology studies \citep{sesia2025distributionfreeselectionlowriskoncology}.

\paragraph*{Aggregating conformal predictions.}
\citet{liang2024conformal}, \citet{liang2022integrative,yang2025selection}, and \citet{liang2023conformal} study how to choose from or aggregate multiple conformal predictions for the same test case obtained using different models. Finite-sample guarantees are retained either by designing aggregation procedures that preserve permutation invariance, or by quantifying how deviations from symmetry affect validity.

\paragraph*{Extended theories.}
\citet{barber2025unifying} show that many conformal methods are special cases of a unified framework, extending the connection to pivots (Section~\ref{sec:foundations-pivots} and \cite{dobriban2023joint}). From this point of view, conformal prediction arises by revealing partial information about the data and deriving a pivotal conditional distribution. For standard split and full conformal methods, what is revealed is the bag of observed values, which induces a distribution that is uniform over permutations. Complementarily, \citet{dobriban2025symmpi} extend conformal prediction to data exhibiting general group symmetries.

\paragraph*{Conformal prediction using $e$-values.}
As an alternative to the $p$-value–based perspective adopted here, conformal prediction can be reformulated using $e$-values \citep{pmlr-v230-balinsky24a,VOVK2025111674}. The concept of $e$-values dates back to \citet{vovk1993empirical} and \citet{10.5555/2074094.2074112}; see \citet{ramdas2025hypothesis} for an overview.
An advantage of $e$-values over $p$-values is that they facilitate aggregation of potentially dependent inferences, whereas $p$-values are generally harder to combine \citep{vovk2020combining,vovk2022admissible}.
In conformal prediction, $e$-values enable sequential prediction with dynamic stopping rules, data-driven selection of coverage levels \citep{gauthier2025values}, the reduction of algorithmic variability \citep{NEURIPS2023_cec8ad77,lee2025full}, and the extension of prediction sets to allow non-binary inclusion \citep{koning2025fuzzy}.

\paragraph*{Alternative frameworks for distribution-free predictive inference.}
More broadly, methods beyond split and full conformal prediction can provide distribution-free predictive inference based on black-box models.  \citet{barber2021predictive} develop jackknife and cross-validation–based approaches. Although introduced for regression, these ideas apply more generally; see e.g., \citet{romano2020classification}. Compared to standard conformal prediction, they are often more data-efficient, at the cost of greater theoretical complexity and somewhat looser guarantees unless models are sufficiently stable \citep{bousquet2002stability}. \citet{kim2020predictive} show these methods integrate naturally with bagging-based learners \citep{breiman1996bagging}, including random forests \citep{breiman2001random}.

\paragraph*{Time series and online prediction.}
When observations exhibit unknown temporal dependence, neither exchangeable nor weighted conformal prediction applies directly, motivating methods for time series and other dependent data streams \citep{xu2021conformal}. Online conformal prediction targets sequential prediction under minimal stochastic assumptions, allowing even deterministic or adversarial sequences.
The goal is to control miscoverage on average over long time horizons, rather than over i.i.d.\ test cases from a population, by updating the nominal level of future conformal prediction sets as new labeled data arrive. Representative methods use online subgradient descent on the quantile loss \citep{gibbs2021adaptive}; multi-valid and grid-based schemes \citep{bastani2022practical}; expert-aggregation avoiding manual step-size tuning \citep{zaffran2022adaptive,gibbs2024conformal}; parameter-free online learning \citep{zhang2024discounted,podkopaev2024adaptive}; and related approaches \citep{bhatnagar2023improved,angelopoulos2023conformalpid,cai2024tractable,srinivas2026online}. Recent work further shows that tools from online learning can be leveraged in this setting, as vanishing linearized regret implies asymptotic coverage \citep{liu2026online}.

\paragraph*{Decision making.}
Another line of research studies how prediction sets can support decision making.
\cite{cresswell2024conformal} show that humans make better data-driven decisions when provided with adaptive conformal prediction sets compared to fixed-size sets with the same coverage guarantee.
\citet{kiyani2025decision} and \citet{wang2026optimal} analyze the optimality of prediction sets for decision-makers minimizing different loss functions.

\begin{summary}[SUMMARY POINTS]
\begin{enumerate}
\item Machine learning has become increasingly attentive to uncertainty and confidence, yet traditional model-based statistical theory may struggle to keep pace with the complexity and rapid evolution of modern data sets and algorithms. Conformal prediction offers a timely and principled statistical response to this challenge.
\item Conformal prediction requires minimal assumptions on data symmetries, such as exchangeability. Guarantees are exact but require thoughtful interpretation.
\item Conformal prediction guarantees marginal coverage regardless of model quality, but informative prediction sets require accurate, uncertainty-aware models.
\item The statistical principles of conformal prediction are simple and flexible, allowing the methodology to be extended in many directions.
\item Conformal methods are well suited to predicting observable quantities; they can be adapted to settings with imperfectly observed outcomes, but are not designed for inference on unobservable population parameters.
\end{enumerate}
\end{summary}

\begin{issues}[FUTURE ISSUES]
\begin{enumerate}
\item With a relatively solid, though still evolving, foundation, the future of conformal prediction may lie in its deeper integration into real-world data science pipelines, AI systems, and data-driven decision making. We therefore cautiously anticipate that many of the most impactful near-term advances will be driven by applications.
\item The practical effectiveness of conformal prediction methods depends on having access to models that can accurately capture predictive uncertainty; conformal prediction therefore complements, rather than substitutes for, continued development in uncertainty-aware machine learning.
\end{enumerate}
\end{issues}


\section*{Acknowledgments}
The authors thank Edgar Dobriban for helpful feedback and contributions to several portions of the paper, especially Section~\ref{sec:calibration-conditional}.
M.S.~was partly supported by a Google Research Scholar award.
S.F.~was partly supported by the Italian Ministry of Education, University and Research (MIUR), ``Dipartimenti di Eccellenza" grant 2023-2027.

\bibliographystyle{abbrvnat}
\bibliography{references}

\clearpage
\appendix
\section{Additional Details on Illustrative Examples}

\subsection{Predicting a Continuous Scalar Variable} \label{app:illustrative-numerical}

This appendix provides additional details related to Section~\ref*{sec:illustrative-numerical}.

\subsubsection{Empirical demonstration}

To demonstrate the performance of conformal prediction intervals, we report in Figure~\ref{fig:example_continuous} the results of numerical experiments based on data simulated from three distinct distributions: a standard normal  $\mathcal{N}(0,1)$; a Student's $t$ with three degrees of freedom; and a normal mixture with three components $(\mathcal{N}(-2,0.01),\mathcal{N}(0,1), \mathcal{N}(2,0.01))$, with weights $(0.09, 0.82, 0.09)$. 
For each setting, we simulate datasets of varying sizes and construct one-sided prediction intervals at level $\alpha = 0.1$.

We compare the conformal method to: \emph{(i)} the population oracle, which serves as the ideal benchmark; \emph{(ii)} an empirical plug-in method that uses the predictive upper bound $Q(\hat{P}(\mathbf{Y}_{1:n});\, 1-\alpha)$, where $\hat{P}(\mathbf{Y}_{1:n})$ denotes the empirical distribution (this method is expected to behave similarly to conformal prediction when $n$ is large);
and \emph{(iii)} a parametric normal prediction interval with upper bound $U_{\alpha}^{\mathrm{norm}}(\mathbf{Y}_{1:n}) = \bar{Y}_{1:n} + t_{1-\alpha,n-1} \cdot \mathrm{sd}(\mathbf{Y}_{1:n})\sqrt{1+n^{-1}}$,
where $\bar{Y}_{1:n} = n^{-1}\sum_{i=1}^{n} Y_i$ and $\mathrm{sd}^2(\mathbf{Y}_{1:n}) = (n-1)^{-1}\sum_{i=1}^{n}(Y_i - \bar{Y}_{1:n})^2$; this procedure is  valid and asymptotically optimal if the population is normal \citep[e.g.,][]{geisser2017predictive,meeker2017statistical}.

Figure~\ref{fig:example_continuous} confirms that conformal prediction intervals converge to the oracle intervals as the sample size $n$ grows, always maintaining marginal coverage at or above $1-\alpha$. When the true data-generating distribution is normal, conformal prediction is only slightly less efficient than the parametric method. However, the latter is only valid under correct model specification; in general, it may substantially over-cover (Student's $t$ example) or under-cover (mixture example). The plug-in approach under-covers in small samples.

\begin{figure}[!htbp]
    \centering
    \includegraphics[width = 0.9\textwidth]{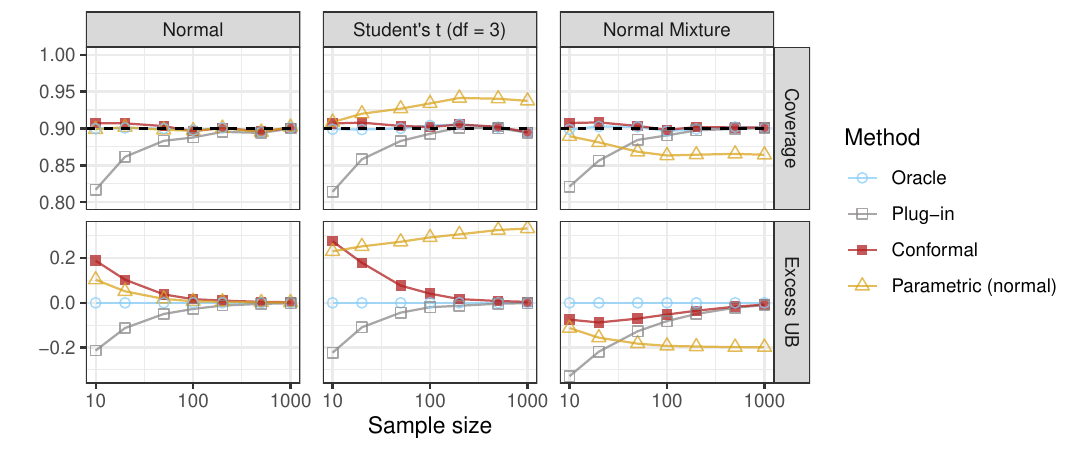}
    \caption{
    Illustrative simulation of one-sided prediction upper bounds for a continuous random variable at level $\alpha = 0.1$, under three different data-generating distributions. Two performance metrics are shown as a function of the sample size: marginal coverage (top) and excess upper bound relative to the ideal population oracle (bottom). The methods compared are conformal prediction, an empirical plug-in approach, a normal parametric prediction interval, and the oracle interval based on the true data-generating distribution. Each curve represents averages over $10{,}000$ independent simulations. Conformal prediction guarantees finite-sample coverage and performs similarly to the oracle as the sample size grows.}
    \label{fig:example_continuous}
\end{figure}

\subsection{Predicting a Categorical Variable} \label{app:illustrative-categorical}

This appendix provides additional details related to Section~\ref*{sec:illustrative-categorical}.

\subsubsection{A randomized oracle}

The oracle prediction sets defined in Section~\ref*{sec:illustrative-categorical} can be made even smaller on average while maintaining marginal coverage through randomization. Using the random features $\xi$, the oracle may decide to exclude the borderline label in some cases, depending on how much the cumulative probability mass exceeds $1-\alpha$. 
Formally, the oracle includes label $y$ in the prediction set if and only if $p^*(y, \xi_{n+1}) := \sum_{k=r(y)+1}^{K} \pi^*_{(k)} + \pi^*_y \cdot \xi_{n+1} > \alpha$, where $\pi^*_{(1)} \ge \pi^*_{(2)} \ge \cdots \ge \pi^*_{(K)}$ are the sorted label probabilities and $r(y)$ is the rank of $\pi^*_y$, so that $\pi^*_y = \pi^*_{(r(y))}$; e.g., see \citet{romano2020classification}.

\subsubsection{Empirical demonstration}

To visualize the performance of the conformal prediction sets described above, 
Figure~\ref{fig:example_categorical} reports the results of numerical experiments based on synthetic data generated from three multinomial distributions over $K=5$ labels. 
Specifically, we consider: 
a \emph{balanced} distribution assigning equal probability $1/5$ to each label; 
a \emph{moderately imbalanced} distribution with probabilities $(0.4,\,0.25,\,0.15,\,0.12,\,0.08)$; 
and a \emph{highly imbalanced} distribution with probabilities $(0.75,\,0.15,\,0.09,\,0.01,\,0)$.
For each setting we generate data sets of varying sizes and construct prediction sets at level $\alpha = 0.1$.

We compare conformal prediction with three benchmarks:
\emph{(i)} the population oracle, which uses knowledge of the true distribution $P^*$; 
\emph{(ii)} an empirical plug-in approach, which replaces $P^*$ by its multinomial maximum-likelihood estimate based on the observed sample $\mathbf{Y}_{1:n}$; and 
\emph{(iii)} a Bayesian approach using a uniform Dirichlet prior.  
In the Bayesian method, we place a Dirichlet$(1,\ldots,1)$ prior on the class probabilities $\pi = (\pi_1,\ldots,\pi_K)$, yielding a Dirichlet$(1+n_1,\ldots,1+n_K)$ posterior after observing label counts $(n_1,\ldots,n_K)$.  
We then apply the oracle construction using the predictive distribution $\mathbb{P}_{\text{Bayes}}[Y_{n+1}=k \mid \mathbf{Y}_{1:n}] = (1 + n_k)/(K + n)$ instead of the true data-generating distribution $P^*$.

\begin{figure}[!htbp]
    \centering
    \includegraphics[width = 0.9\textwidth]{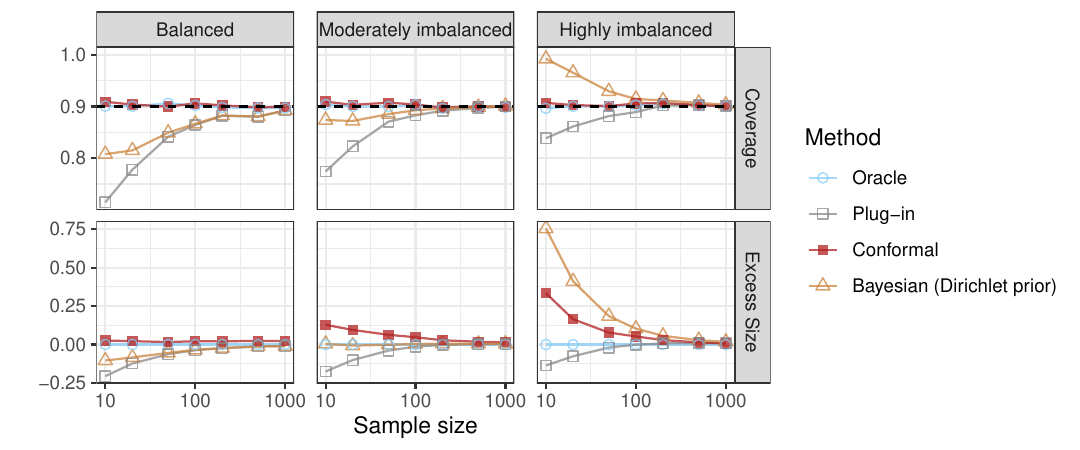}
    \caption{
    Illustration of prediction sets for a categorical outcome at target level $\alpha = 0.1$ under three different data-generating distributions. 
    The top panel shows the marginal coverage as a function of the sample size, and the bottom panel shows the excess size of each method relative to the population oracle.
    The methods compared are conformal prediction, a plug-in estimator, a Bayesian approach with a uniform Dirichlet prior, and the oracle prediction set based on the true data-generating distribution. 
    Results are averaged over $10{,}000$ independent repetitions.
    Conformal prediction maintains finite-sample marginal coverage guarantees and approaches the oracle performance rapidly as the sample size grows.}
    \label{fig:example_categorical}
\end{figure}

Figure~\ref{fig:example_categorical} shows that conformal prediction sets converge quickly to the oracle sets as the sample size increases, while consistently achieving marginal coverage at the desired level $1-\alpha$.  
The plug-in approach, lacking any finite-sample guarantees, under-covers substantially in small samples.  
The Bayesian predictor, although relatively more conservative due to the prior, still does not satisfy finite-sample frequentist coverage guarantees and may either under-cover (balanced and moderately imbalanced cases) or over-cover (highly imbalanced case), depending on how well the prior aligns with the true population distribution.

\FloatBarrier

\section{Diabetes Classification Example using {NHANES} Data} \label{app:diabetes_details}

This appendix describes the data preprocessing, variable construction, model specification, and conformal calibration steps for the diabetes classification example using NHANES (08/2021--08/2023) data (Figure~\ref*{fig:3}), obtained from \citet{paulose2021national}. We merge demographic, examination, laboratory, and questionnaire components using the respondent identifier (SEQN). All preprocessing is performed prior to sample splitting.

\paragraph*{Outcome definition and preprocessing.}
An individual is classified as having diabetes ($Y=1$) if they report a physician diagnosis of diabetes or meet standard laboratory criteria (fasting plasma glucose $\ge 126$ mg/dL or HbA1c $\ge 6.5\%$). Individuals reporting no diagnosis and below both laboratory thresholds are classified as healthy ($Y=0$); observations with missing or indeterminate outcome information are excluded. We restrict the analysis to participants over 30 years of age and exclude pregnant individuals. After additionally removing observations with missing covariates, 2{,}125 individuals remain.

\paragraph*{Risk modeling.}
We fit a logistic regression model including demographic variables (age, sex, race/ethnicity, poverty-income ratio), anthropometric measures (waist circumference and height), cardiometabolic markers (systolic blood pressure, triglyceride-to-HDL ratio, ALT, uric acid, and GGT), and behavioral variables (sleep duration and self-reported physical activity). Continuous variables are used on their natural scales. Age is modeled flexibly using a natural cubic spline with three degrees of freedom, while the remaining covariates enter as linear or categorical main effects.

\paragraph*{Split-conformal methodology.}
The data are randomly partitioned into training (1{,}062), calibration (319), and test (744) sets. The logistic model is fit on the training set to compute an estimate $\hat{\pi}_1(x)$ of $\mathbb{P}(Y=1 \mid X=x)$, which also gives $\hat{\pi}_0(x) = 1 - \hat{\pi}_1(x)$. On the calibration set, we compute probability-based nonconformity scores and determine the empirical $(1-\alpha)(1+1/n)$ quantile with $\alpha=0.05$, yielding a threshold $\hat{\tau} \approx 0.813$. For a new individual, the conformal prediction set is $\widehat{C}(x) = \{y \in \{0,1\}: \hat{\pi}_y(x) \ge 1 - \hat{\tau}\}$, which in the binary setting (with $\hat{\tau} \geq 1/2$) produces one of three possible outputs: $\{\text{Healthy}\}$, $\{\text{Diabetes}\}$, or $\{\text{Healthy},\text{Diabetes}\}$.
Under exchangeability, this guarantees marginal coverage $\mathbb{P}\bigl(Y \in \widehat{C}(X)\bigr) \ge 1-\alpha = 0.95$, where the probability is taken over the joint distribution of $(X,Y)$. Coverage is evaluated on the independent test set as the proportion of individuals whose true label lies in the reported prediction set.


\end{document}

\end{document}